\documentclass[reqno,11pt]{amsart}
\usepackage{amsmath,amssymb,amsthm,graphicx,a4wide,enumerate,url}
\usepackage[small,bf]{caption} \setlength{\captionmargin}{0pt}
\setlength{\oddsidemargin}{0.5\paperwidth}
\addtolength{\oddsidemargin}{-0.5\textwidth}
\addtolength{\oddsidemargin}{-1in}
\setlength{\evensidemargin}{\oddsidemargin}
\setlength{\textheight}{8.6in}
\setlength{\topmargin}{-0.1in}
\setlength{\headsep}{.2in}
\setlength{\footskip}{.3in}
\theoremstyle{plain}
\newtheorem{thm}{Theorem}
\newtheorem{lem}[thm]{Lemma}

\theoremstyle{definition}

\theoremstyle{remark}
\newtheorem{rem}{Remark}

\newcommand{\prn}[1]{\left(#1\right)}
\newcommand{\abs}[1]{\left|#1\right|}

\newcommand{\ud}[1]{\,\mathrm{d}#1}
\newcommand{\R}{\mathbb{R}}
\newcommand{\calN}{\mathcal{N}\!{}}


\begin{document}
\parskip.9ex

\title[StaRMAP --- Staggered Grid Method for Moment Methods of Radiative Transfer]
{StaRMAP --- A Second Order Staggered Grid Method for Spherical Harmonics Moment Equations of Radiative Transfer}
\author[B. Seibold]{Benjamin Seibold}
\address[Benjamin Seibold]
{Department of Mathematics \\ Temple University \\ \newline
1805 North Broad Street \\ Philadelphia, PA 19122}
\email{seibold@temple.edu}
\urladdr{http://www.math.temple.edu/~{}seibold}
\author[M. Frank]{Martin Frank}
\address[Martin Frank]
{RWTH Aachen University \\ Department of Mathematics \& Center for Computational
Engineering Science \\ Schinkelstrasse 2, D-52062 Aachen, Germany}
\email{frank@mathcces.rwth-aachen.de}
\urladdr{http://www.mathcces.rwth-aachen.de/5people/frank/start}

\subjclass[2000]{65M06; 35L50; 65M12; 35Q20}

\keywords{radiative transfer, method of moments, numerical method, hyperbolic balance law, staggered grid, Matlab}

\begin{abstract}
We present a simple method to solve spherical harmonics moment systems, such as the the time-dependent $P_N$ and  $SP_N$ equations, of radiative transfer. The method, which works for arbitrary moment order $N$, makes use of the specific coupling between the moments in the $P_N$ equations. This coupling naturally induces staggered grids in space and time, which in turn give rise to a canonical, second-order accurate finite difference scheme. While the scheme does not possess TVD or realizability limiters, its simplicity allows for a very efficient implementation in \textsc{Matlab}. We present several test cases, some of which demonstrate that the code solves problems with ten million degrees of freedom in space, angle, and time within a few seconds. The code for the numerical scheme, called \textsf{StaRMAP} (\textsf{Sta}ggered grid \textsf{R}adiation \textsf{M}oment \textsf{Ap}proximation), along with files for all presented test cases, can be downloaded so that all results can be reproduced by the reader.
\end{abstract}

\maketitle

\section{Introduction}
\label{sec:introduction}
The purpose of this paper is to present a simple, yet accurate solution method for the $P_N$ equations of radiative transfer, and its efficient implementation in \textsc{Matlab}. The key idea is to make use of the specific coupling of unknowns that is induced by the spherical harmonics being a family of orthogonal polynomials. This leads to a natural staggered grid on which the equations are discretized.

The $P_N$ method (cf.\ \cite{BrunnerHolloway2005}) is one of several ways to discretize the equation of radiative transfer. It is often introduced as an approximate method (method of moments) to reduce the high dimensionality when the kinetic equation of radiative transfer, which is formulated on a six-dimensional domain (one time, two angle, three space), is discretized. Another way of interpreting the $P_N$ equations is to view them as a spectral discretization in the angular variable.

The efficient numerical solution of the $P_N$ equations has become a recent subject of interest \cite{Olson2009,McClarrenHollowayBrunner2008,McClarrenEvansDensmore2008,BrunnerHolloway2005}. The $P_N$ equations have several advantages over other more direct discretizations, such as discrete ordinates, most prominently rotational invariance. The lack of this property leads to the so-called ray effect in discrete ordinates approximations (cf.\ \cite{MorelWareingLowrieParsons2003}).
The key property that the numerical method presented in this work is based upon, is also exclusive to spherical harmonics moment methods, namely a specific coupling structure between the moments. The main drawback of the $P_N$ equations is that they, being a spectral method, can exhibit Gibbs phenomena, i.e., oscillatory behavior that is not present in the solution of the original kinetic equation. In the context of radiative transfer, this can yield negative and therefore unphysical particle densities. In many cases, the Gibbs phenomenon is not a major problem (given the oscillations are small in amplitude), since the $P_N$ equations remain always well-defined. However, sometimes the unphysical particle densities are unacceptable, and several recent works have addressed this fact \cite{HauckMcClarren2010,McClarrenHauck2010}.

The numerical method presented here does not possess limiters in the hyperbolic solver nor does it overcome the Gibbs phenomenon. However, its simplicity allows for fast and very highly resolved computations, so that one can often reduce the spurious oscillations to an acceptable magnitude by choosing the moment order $N$ sufficiently large.

This paper is organized as follows. In Sect.~\ref{sec:PN_slab} we introduce the $P_N$ equations in slab geometry and point out their structure. This is done for didactical reasons, because the notation for the $P_N$ equations in two space dimensions (derived in Sect.~\ref{sec:PN_2d}) becomes quite tedious. The staggered grid method is presented and analyzed in Sect.~\ref{sec:numerical_method}, and in Sect.~\ref{sec:implementation_matlab} the efficient implementation of the numerical scheme in \textsc{Matlab} is outlined. Numerical examples are presented in Sect.~\ref{sec:numerical_results}. We attempt to meet the standards of reproducible research in the computational sciences, laid out by LeVeque \cite{LeVeque2009}. The source code of our package \textsf{StaRMAP} (\textsf{Sta}ggered grid \textsf{R}adiation \textsf{M}oment \textsf{Ap}proximation), along with files to generate all this paper's figures, as well as additional examples, are available to the reader online \cite{StaRMAP}.

\vspace{1.5em}
\section{The Slab Geometry $P_N$ Equations}
\label{sec:PN_slab}
We consider the radiative transfer equation in the form \cite{CaseZweifel1967}
\begin{equation}
\label{eq:RTE}
\begin{split}
\partial_t \psi(t,x,\Omega)
&+ \Omega\cdot\nabla_x\psi(t,x,\Omega)
+ \Sigma_t(t,x)\psi(t,x,\Omega) \\
&= \int_{S^2} \Sigma_s(t,x,\Omega\cdot\Omega') \psi(t,x,\Omega')\ud{\Omega'} + q(t,x,\Omega)\;.
\end{split}
\end{equation}
The quantity $\psi$, which is defined for time $t>0$, space coordinate $x\in\R^3$, and direction $\Omega\in S^2$, is the density of photons that undergo scattering and absorption in a medium. The medium is characterized by the absorption cross section $\Sigma_a$, the scattering kernel $\Sigma_s$ and the total cross section $\Sigma_t = \Sigma_{s0} + \Sigma_a$ ($\Sigma_{s0}$ is defined below). In addition, there is a source $q$. Note that normally in \eqref{eq:RTE}, a factor of $\frac{1}{c}$ appears in front of the time derivative, where $c$ is the speed of light. Here we have set $c = 1$, i.e., we measure time in units of the space scale divided by $c$. Throughout the paper, to return to physical units, time variables have to be multiplied by $c$.

The slab geometry radiative transfer equation is obtained by considering a slab between two infinite parallel plates. Assume for instance that the $z$-axis is perpendicular to the plates. If the setting is invariant under translations perpendicular to, and rotations around, the $z$-axis, then the unknown $\psi$ depends only on the $z$-component of the spatial variable, and one angular variable $\mu$ (cosine of the angle between direction and $z$-axis).

To obtain the $P_N$ equations, we express the angular dependence of the distribution function in terms of a Fourier series,
\begin{equation}
\psi(t,z,\mu) = \sum_{\ell=0}^\infty \psi_\ell(t,z) \tfrac{2\ell+1}{2} P_\ell(\mu)\;,
\end{equation}
where $P_\ell$ are the Legendre polynomials. These form an orthogonal basis of the space of polynomials with respect to the standard scalar product on $[-1,1]$.

One can obtain equations for the Fourier coefficients
\begin{equation}
\psi_\ell = \int_{-1}^1 \psi P_\ell \ud{\mu}
\end{equation}
by testing the radiative transfer equation (\ref{eq:RTE}) with $P_\ell$ and then integrating. Thus we obtain (suppressing the arguments)
\begin{equation*}
\partial_t \psi_\ell+\partial_z\int_{-1}^1 \mu P_\ell \psi \ud{\mu} + \Sigma_{t\ell} \psi_\ell =  q_\ell
\end{equation*}
for $\ell=0,1,\dots$, where
\begin{equation*}
\Sigma_{t\ell} = \Sigma_t - \Sigma_{s\ell} = \Sigma_a + \Sigma_{s0} - \Sigma_{s\ell} \quad\text{and}\quad
\Sigma_{s\ell} = 2\pi \int_{-1}^1 P_\ell(\mu) \Sigma_s(\mu) \ud{\mu}\;.
\end{equation*}

Two properties of the spherical harmonics are crucial for our method. These appear here as properties of the Legendre polynomials.
First, we observe that by the procedure above we have diagonalized the scattering operator on the right hand side (the Legendre polynomials are eigenfunctions of the scattering operator). Second, a general property of orthogonal polynomials is that they satisfy a recursion relation. In particular, the Legendre polynomials $P_\ell$ satisfy
\begin{equation*}
\mu P_\ell(\mu) = \tfrac{\ell}{2\ell+1}P_{\ell-1}(\mu)
+ \tfrac{\ell+1}{2\ell+1}P_{\ell+1}(\mu)\;.
\end{equation*}
Using this fact and truncating the expansion at $\ell=N$ we arrive at the slab-geometry $P_N$ equations
\begin{equation}
\label{eq:PN_slab}
\partial_t \psi_\ell+\partial_z \prn{ \tfrac{\ell+1}{2\ell+1}\psi_{\ell+1}
+ \tfrac{\ell}{2\ell+1}\psi_{\ell-1} } + \Sigma_{t\ell} \psi_\ell = q_\ell\;.
\end{equation}
This system can be written as
\begin{equation*}
\partial_t\vec{u} + M\cdot\partial_z\vec{u} +C\cdot\vec{u} = \vec{q}\;,
\end{equation*}
where
\begin{equation*}
M = \begin{pmatrix}
0 & 1 & & & \\
\frac{1}{3} & 0 & \frac{2}{3} & & \\
& \frac{2}{5} & 0 & \frac{3}{5} & \\
& & \frac{3}{7} & 0 & \ddots \\ & & & \ddots & \ddots
\end{pmatrix}
\quad\text{and}\quad
C = \begin{pmatrix}
\Sigma_{t0} &  &  & & \\
 & \Sigma_{t1} &  & & \\
 &  & \ddots &  & \\
& & &  & \\ & & &  &
\end{pmatrix}\;.
\end{equation*}
The two properties mentioned above lead to
\begin{lem}
\label{lem:slab_even_odd}
The time derivative of $\psi_\ell$ for even (odd) $\ell$ depends only on the spatial derivative of $\psi_k$ for odd (even) $k$, and on the value of $\psi_\ell$ itself.
\end{lem}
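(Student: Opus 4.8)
The plan is to simply read off the claim from the component form \eqref{eq:PN_slab} of the slab-geometry $P_N$ system. Solving that equation for the time derivative gives
\begin{equation*}
\partial_t \psi_\ell = q_\ell - \Sigma_{t\ell}\,\psi_\ell
- \partial_z\!\prn{\tfrac{\ell+1}{2\ell+1}\psi_{\ell+1} + \tfrac{\ell}{2\ell+1}\psi_{\ell-1}}\;,
\end{equation*}
so the right-hand side involves only: the source $q_\ell$; the undifferentiated value $\psi_\ell$ (entering through the diagonal absorption/scattering coefficient $\Sigma_{t\ell}$); and the spatial derivatives of the two neighbours $\psi_{\ell-1}$ and $\psi_{\ell+1}$. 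Since $\ell-1$ and $\ell+1$ have parity opposite to that of $\ell$, this is exactly the assertion: for even $\ell$ only $\partial_z\psi_k$ with $k$ odd appears (and $\psi_\ell$ itself), and vice versa.

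I would then recast this structurally in terms of the matrices $M$ and $C$: the claim is equivalent to the statement that $M$ has nonzero entries only on the first sub- and super-diagonal (a ``hollow tridiagonal'' pattern), while $C$ is diagonal. The sub-/super-diagonal pattern of $M$ is precisely the content of the three-term recursion $\mu P_\ell(\mu) = \tfrac{\ell}{2\ell+1}P_{\ell-1}(\mu) + \tfrac{\ell+1}{2\ell+1}P_{\ell+1}(\mu)$ quoted above, which forces $\int_{-1}^1 \mu P_\ell \psi\ud\mu$ to be a linear combination of $\psi_{\ell-1}$ and $\psi_{\ell+1}$ only; the diagonality of $C$ is the diagonalization of the scattering operator by the Legendre basis, also noted above. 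A hollow tridiagonal matrix maps the span of even-indexed basis vectors into the span of odd-indexed ones and conversely, which is the coordinate-free version of the parity splitting.

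There is essentially no obstacle here — the lemma is a direct observation rather than a computation — so the only points that warrant a sentence are the two boundary indices. For $\ell = 0$ the term $\tfrac{\ell}{2\ell+1}\psi_{\ell-1}$ vanishes, so $\partial_t\psi_0$ depends only on $\partial_z\psi_1$ and on $\psi_0$, consistent with the statement; and at the truncation index $\ell = N$ the term $\tfrac{\ell+1}{2\ell+1}\psi_{\ell+1}$ is dropped, which only removes a coupling and hence does not affect the parity claim. With these edge cases noted, the proof is complete.
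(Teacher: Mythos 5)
Your proposal is correct and follows exactly the route the paper intends: the lemma is presented there as an immediate consequence of the three-term recursion (giving the hollow tridiagonal $M$) and the diagonalization of the scattering operator (giving the diagonal $C$), with no further argument supplied. Your additional remarks on the boundary indices $\ell=0$ and $\ell=N$ are a harmless and correct refinement of the same observation.
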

Lemma~\ref{lem:slab_even_odd} creates an analogy to the wave equation (with decay), and thus motivates a discretization of the slab geometry $P_N$ equation \eqref{eq:PN_slab} on staggered grids, i.e., all the components with odd $\ell$ are placed in the middle between the components with even $\ell$, and the spatial derivative is approximated by central differences. The numerical scheme presented in Sect.~\ref{sec:numerical_method} generalizes this analogy in the two-dimensional case.

\vspace{1.5em}
\section{The Two-Dimensional $P_N$ Equations}
\label{sec:PN_2d}
In this section, we adopt the notation and the form of the $P_N$ equations as in \cite{BrunnerHolloway2005}. The complex-valued spherical harmonics are defined as
\begin{equation*}
Y_\ell^m(\mu,\phi) = (-1)^m \sqrt{\tfrac{2\ell+1}{4\pi}\tfrac{(\ell-m)!}{(\ell+m)!}}\, e^{im\phi}P_\ell^m(\mu)\;,
\end{equation*}
where $\ell\geq 0$ and $-\ell\leq m\leq \ell$. Here, $P_\ell^m$ are the associated Legendre polynomials. The spherical harmonics form an orthonormal family on the unit sphere. They satisfy a recursion relation of the form
\begin{equation*}
\Omega \overline{Y_\ell^m} = \tfrac{1}{2}
\begin{bmatrix}
-c_{\ell-1}^{m-1}\overline{Y_{\ell-1}^{m-1}} + d_{\ell+1}^{m-1}\overline{Y_{\ell+1}^{m-1}} + e_{\ell-1}^{m+1}\overline{Y_{\ell-1}^{m+1}} - f_{\ell+1}^{m+1}\overline{Y_{\ell+1}^{m+1}} \\
i\prn{c_{\ell-1}^{m-1}\overline{Y_{\ell-1}^{m-1}} - d_{\ell+1}^{m-1}\overline{Y_{\ell+1}^{m-1}} + e_{\ell-1}^{m+1}\overline{Y_{\ell-1}^{m+1}} - f_{\ell+1}^{m+1}\overline{Y_{\ell+1}^{m+1}}} \\
2(a_{\ell-1}^m \overline{Y_{\ell-1}^m} + b_{\ell+1}^m \overline{Y_{\ell+1}^m})
\end{bmatrix}\;,
\end{equation*}
with the coefficients \cite{BrunnerHolloway2005}
\begin{equation*}
\begin{array}{lll}
 a_\ell^m = \sqrt{\frac{(\ell-m+1)(\ell+m+1)}{(2\ell+3)(2\ell+1)}}\;,
&b_\ell^m = \sqrt{\frac{(\ell-m  )(\ell+m  )}{(2\ell+1)(2\ell-1)}}\;,
&c_\ell^m = \sqrt{\frac{(\ell+m+1)(\ell+m+2)}{(2\ell+3)(2\ell+1)}}\;, \\[.8em]
 d_\ell^m = \sqrt{\frac{(\ell-m  )(\ell-m-1)}{(2\ell+1)(2\ell-1)}}\;,
&e_\ell^m = \sqrt{\frac{(\ell-m+1)(\ell-m+2)}{(2\ell+3)(2\ell+1)}}\;,
&f_\ell^m = \sqrt{\frac{(\ell+m  )(\ell+m-1)}{(2\ell+1)(2\ell-1)}}\;.
\end{array}
\end{equation*}
This form already shows a pattern in the coupling of the different moments, that is similar to the slab geometry case.

We multiply (\ref{eq:RTE}) by $\overline{Y_l^m}$, integrate over $\Omega$, and define the expansion coefficients
\begin{equation*}
\psi_\ell^m(t,x) = \int_{S^2} \overline{Y_\ell^m(\Omega)} \psi(t,x,\Omega) \ud{\Omega}\;.
\end{equation*}
As in the slab geometry case, the scattering term becomes diagonal
\begin{equation*}
\int_{S^2} \overline{Y_\ell^m(\Omega)} \int_{S^2} \Sigma_s(\Omega\cdot\Omega') \psi(t,x,\Omega') \ud{\Omega'}\ud{\Omega} =
\Sigma_{s\ell} \psi_\ell^m(t,x),
\end{equation*}
where as before $\Sigma_{s\ell} = 2\pi \int_{-1}^1 P_\ell(\mu) \Sigma_s(\mu) \ud{\mu}$.

Altogether, we obtain the well-known complex-valued $P_N$ equations
\begin{equation}
\label{eq:Pncomplex}
\begin{split}
\partial_t \psi_\ell^m &+\tfrac{1}{2}\partial_x
\prn{-c_{\ell-1}^{m-1}\psi_{\ell-1}^{m-1} + d_{\ell+1}^{m-1}\psi_{\ell+1}^{m-1}
+ e_{\ell-1}^{m+1}\psi_{\ell-1}^{m+1} - f_{\ell+1}^{m+1}\psi_{\ell+1}^{m+1}} \\
&+\tfrac{i}{2}\partial_y
\prn{c_{\ell-1}^{m-1}\psi_{\ell-1}^{m-1} - d_{\ell+1}^{m-1}\psi_{\ell+1}^{m-1}
+ e_{\ell-1}^{m+1}\psi_{\ell-1}^{m+1} - f_{\ell+1}^{m+1}\psi_{\ell+1}^{m+1}} \\
&+ \partial_z
\prn{a_{\ell-1}^m \psi_{\ell-1}^m + b_{\ell+1}^m} + \Sigma_{t\ell} \psi_\ell^m
= q_\ell^m
\end{split}
\end{equation}
for $0\leq \ell < \infty$ and $-\ell\leq m \leq \ell$.

In this work we consider the two-dimensional real-valued $P_N$ equations, which we now derive. There is, however, no conceptual difference to the three-dimensional equations. The reduction is again done via symmetry. This means that we actually solve three-dimensional radiative transfer, but in a geometry that reduces the number of unknowns. For a two-dimensional domain $D$, consider the infinite cylinder $D\times \R\subset\R^3$. We take the angular variable to be aligned with the $z$-direction,
\begin{equation*}
\Omega = (\sqrt{1-\mu^2}\cos\phi,\sqrt{1-\mu^2}\sin\phi,\mu)^T\;.
\end{equation*}
If we assume that all data (coefficients, initial and boundary conditions) are $z$-independent, then the solution $\psi$ is $z$-independent and additionally an even function in $\mu$. Therefore, if $\ell+m$ is odd, the associated Legendre polynomial $P_\ell^m$ is an odd function in $\mu$, and as a consequence the moments for which $\ell+m$ is odd have to vanish. Thus we are left with the (still complex) moments
\begin{equation*}
\psi_0^0,\ \psi_1^{-1},\ \psi_1^1,\ \psi_2^{-2},\ \psi_2^{0},\ \psi_2^{2},\ \dots
\end{equation*}
We thus obtain the following matrix formulation of the $P_N$ equations
\begin{equation*}
\begin{split}
\partial_t &\begin{bmatrix} \psi_0^0\\ \psi_1^{-1}\\ \psi_1^1\\ \psi_2^{-2}\\ \psi_2^{0}\\ \psi_2^{2}\\ \vdots  \end{bmatrix}
+\partial_x \tfrac{1}{2} \begin{bmatrix}
& d_1^{-1} & -f_1^1 & & & & * \\
e_0^0 & & & d_2^{-2} & -f_2^0 & 0 & \\
-c_0^0 & & & 0 & d_2^0 & -f_2^2 & \\
& e_1^{-1} & 0 & & & & * \\
& -c_1^{-1} & e_1^1 & & & & * \\
& 0 & -c_1^1 & & & & * \\
& & & * & * & * & \end{bmatrix}
\cdot
\begin{bmatrix} \psi_0^0\\ \psi_1^{-1}\\ \psi_1^1\\ \psi_2^{-2}\\ \psi_2^{0}\\ \psi_2^{2}\\ \vdots \end{bmatrix} \\
&+ \partial_y \tfrac{i}{2} \begin{bmatrix}
& -d_1^{-1} & -f_1^1 & & & & * \\
e_0^0 & & & -d_2^{-2} & -f_2^0 & 0 & \\
c_0^0 & & & 0 & -d_2^0 & -f_2^2 & \\
& e_1^{-1} & 0 & & & & * \\
& c_1^{-1} & e_1^1 & & & & * \\
& 0 & c_1^1 & & & & * \\
& & & * & * & * & \end{bmatrix}
\cdot
\begin{bmatrix} \psi_0^0\\ \psi_1^{-1}\\ \psi_1^1\\ \psi_2^{-2}\\ \psi_2^{0}\\ \psi_2^{2}\\ \vdots \end{bmatrix} \\
&+ \begin{bmatrix}
\Sigma_{t0} & & & & & & \\
& \Sigma_{t1} & & & & & \\
& & \Sigma_{t1} & & & & \\
& & & \Sigma_{t2} & & & \\
& & & & \Sigma_{t2} & & \\
& & & & & \Sigma_{t2} & \\
& & & & & & \ddots \end{bmatrix}
\cdot
\begin{bmatrix} \psi_0^0\\ \psi_1^{-1}\\ \psi_1^1\\ \psi_2^{-2}\\ \psi_2^{0}\\ \psi_2^{2}\\ \vdots  \end{bmatrix} =
\begin{bmatrix} q_0^0\\ q_1^{-1}\\ q_1^1\\ q_2^{-2}\\ q_2^{0}\\ q_2^{2}\\ \vdots \end{bmatrix}\;.
\end{split}
\end{equation*}
We call the matrix behind the $x$-derivative (including the $\frac12$) $M_x^\text{complex}$, and respectively the matrix behind the $y$-derivative (including the $\frac{i}{2}$) $M_y^\text{complex}$. We denote the matrix containing the $\Sigma_{t\ell}$ by $C$.

The last step is to transform this system to real variables. Note that
\begin{equation*}
\overline{\psi_\ell^m} = (-1)^m \psi_\ell^{-m}\;.
\end{equation*}
Real variables $R_\ell^m$ (for $0\leq m \leq \ell$) and $I_\ell^m$ (for $0 < m \leq \ell$) can be obtained by setting
\begin{equation*}
R_\ell^0 = \psi_\ell^0
\end{equation*}
and for $m\neq 0$
\begin{align*}
R_\ell^m &= \tfrac{(-1)^m }{\sqrt{2}}(\psi_\ell^m + (-1)^m \psi_\ell^{-m})\;, \\
I_\ell^m &= \tfrac{(-1)^mi}{\sqrt{2}}(\psi_\ell^m - (-1)^m \psi_\ell^{-m})\;.
\end{align*}
The factor $(-1)^m$ is chosen so that the coefficients can be compared more easily to well-known moments. For example,
\begin{align*}
R_0^0 &= \tfrac{1}{\sqrt{4\pi}}\int_{4\pi} \psi(\Omega) \ud{\Omega} =  \tfrac{1}{\sqrt{4\pi}}\int_{-1}^1 \int_0^{2\pi} \psi(\mu,\phi) \ud{\phi} \ud{\mu}\;, \\
R_1^1 &= \sqrt{\tfrac{3}{4\pi}}\int_{4\pi} \Omega_x \psi(\Omega) \ud{\Omega} =  \sqrt{\tfrac{3}{4\pi}} \int_{-1}^1 \int_0^{2\pi} \sqrt{1-\mu^2}\cos\phi\,\psi(\mu,\phi) \ud{\phi} \ud{\mu}\;, \\
I_1^1 &= \sqrt{\tfrac{3}{4\pi}}\int_{4\pi} \Omega_y \psi(\Omega) \ud{\Omega} =  \sqrt{\tfrac{3}{4\pi}} \int_{-1}^1 \int_0^{2\pi} \sqrt{1-\mu^2}\sin\phi\,\psi(\mu,\phi) \ud{\phi} \ud{\mu} \;.
\end{align*}
The factor $\sqrt{2}$ is chosen to make the transformation from $\psi_\ell^m$ to $R_\ell^m$, $I_\ell^m$ unitary. If we encode this linear relationship into a matrix $S$, so that
\begin{equation}
\label{eq:transformation}
\vec{u} := \begin{bmatrix} R_0^0\\ R_1^1 \\ I_1^1\\ R_2^{2}\\ I_2^{2}\\ R_2^{0}\\ \vdots  \end{bmatrix} =
S \cdot \begin{bmatrix} \psi_0^0\\ \psi_1^{-1}\\ \psi_1^1\\ \psi_2^{-2}\\ \psi_2^{0}\\ \psi_2^{2}\\ \vdots  \end{bmatrix}\;,
\end{equation}
then we obtain the real-valued system matrices for the $P_N$ equations as
\begin{equation*}
M_x^\text{real} = S M_x^\text{complex} S^{-1}
\quad\text{and}\quad
M_y^\text{real} = S M_y^\text{complex} S^{-1}\;.
\end{equation*}
For example, the $P_3$ matrices are
\begin{equation*}
M_x^\text{real} = \tfrac{1}{2} \begin{bmatrix}
& \sqrt{2}d_1^{-1} & 0 & & & & & & & \\
\sqrt{2}d_1^{-1} & & & d_2^{-2} & 0 & -\sqrt{2}f_2^0 & & & & \\
0 & & & 0 & d_2^{-2} & 0 & & & & \\
& d_2^{-2} & 0 & & & & d_{3}^{-3} & 0 & -f_3^{-1} & 0 \\
& 0 & d_2^{-2} & & & & 0 & d_3^{-3} & 0 & -f_3^{-1} \\
& -\sqrt{2}f_2^{0} & 0 & & & & 0 & 0 & \sqrt{2}d_3^{-1} & 0 \\
& & & d_3^{-3} & 0 & 0 & & & & \\
& & & 0 & d_3^{-3} & 0 & & & & \\
& & & -f_3^{-1} & 0 & \sqrt{2}d_3^{-1} & & & & \\
& & & 0 & -f_3^{-1} & 0 & & & &
\end{bmatrix}
\end{equation*}
and
\begin{equation*}
M_y^\text{real} = \tfrac{1}{2} \begin{bmatrix}
& 0 & \sqrt{2}d_1^{-1} & & & & & & & \\
0 & & & 0 & d_2^{-2} & 0 & & & & \\
\sqrt{2}d_1^{-1} & & & -d_2^{-2} & 0 & -\sqrt{2}f_2^0 & & & & \\
& 0 & -d_2^{-2} & & & & 0 & d_{3}^{-3} & 0& f_3^{-1} \\
& d_2^{-2} & 0 & & & & -d_3^{-3} & 0 & -f_3^{-1} & 0 \\
& 0 & -\sqrt{2}f_2^{0} & & & & 0 & 0 & 0 & \sqrt{2}d_3^{-1} \\
& & & 0 & -d_3^{-3}& 0 & & & & \\
& & & d_3^{-3} & 0 & 0 & & & & \\
& & & 0 & -f_3^{-1} & 0 & & & & \\
& & & f_3^{-1} & 0 & \sqrt{2}d_3^{-1} & & & &
\end{bmatrix}\;.
\end{equation*}
From the previous calculation it can be seen that the matrices $M_x^\text{real}$ and $M_y^\text{real}$ couple the variables $R_\ell^m$ and $I_\ell^m$ in a specific way, as follows.
\begin{lem}
\label{lem:sign_pattern_pn}
The following variables are coupled, provided that they are defined (i.e., for $R_\ell^m$ we must have $\ell\geq 0$, $0\leq m\leq \ell$, for $I_\ell^m$ we must have $\ell\geq 0$, $0<m\leq \ell$):
\begin{itemize}
\item The time-derivative of $R_\ell^m$ depends only on the $x$-derivatives of $R_{\ell\pm 1}^{m\pm 1}$, on the $y$-derivatives of $I_{\ell\pm 1}^{m\pm 1}$, and on $R_\ell^m$ itself.
\item The time-derivative of $I_\ell^m$ depends only on the $x$-derivatives of $I_{\ell\pm 1}^{m\pm 1}$, on the $y$-derivatives of $R_{\ell\pm 1}^{m\pm 1}$, and on $I_\ell^m$ itself.
\end{itemize}
\end{lem}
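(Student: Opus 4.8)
The plan is to deduce the statement from the complex-valued $P_N$ equations \eqref{eq:Pncomplex} together with the structure of the change of variables $S$ in \eqref{eq:transformation}. First, specialize \eqref{eq:Pncomplex} to the present two-dimensional setting: since all data are $z$-independent, every moment $\psi_\ell^m$ is $z$-independent, so the $\partial_z$-term in \eqref{eq:Pncomplex} vanishes identically. What remains shows that $\partial_t\psi_\ell^m$ is a linear combination of the $\partial_x$- and $\partial_y$-derivatives of the (defined) moments $\psi_{\ell\pm1}^{m\pm1}$, plus a multiple of $\psi_\ell^m$ itself coming from the diagonal matrix $C$ (which, being $\Sigma_{t\ell}$ times the identity on each $\ell$-level, is untouched by the later change of variables). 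Thus already at the complex level the coupling lives on a checkerboard in the $(\ell,m)$-lattice, connecting $(\ell,m)$ only to $(\ell\pm1,m\pm1)$ and to itself.

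Second, note that $S$ is block-diagonal with respect to the grouping of the complex moments into the pairs $\{\psi_\ell^m,\psi_\ell^{-m}\}$ with $m>0$ and the singletons $\{\psi_\ell^0\}$: by \eqref{eq:transformation} each real pair $(R_\ell^m,I_\ell^m)$, respectively each single $R_\ell^0$, is built only from $\psi_\ell^{\pm m}$, and conversely. Hence $M_x^\text{real}=SM_x^\text{complex}S^{-1}$ and $M_y^\text{real}=SM_y^\text{complex}S^{-1}$ inherit the checkerboard pattern at the level of $(\ell,|m|)$-blocks, so that the block carrying $(R_\ell^m,I_\ell^m)$ is coupled only to the blocks carrying $(R_{\ell\pm1}^{m\pm1},I_{\ell\pm1}^{m\pm1})$ and to itself. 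This already yields the index part of the claim; here one uses that $-m\pm1=-(m\mp1)$, so the partner moments $\psi_{\ell\pm1}^{-m\pm1}$ entering through $\psi_\ell^{-m}$ are again conjugate-partners of the four neighbours.

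Third, and this is the heart of the matter, one has to see that within such a block $\partial_x$ couples $R$ to $R$ and $I$ to $I$, whereas $\partial_y$ couples $R$ to $I$ and $I$ to $R$. Since $S$ is block-diagonal it suffices to verify this on a generic block of each type. Concretely one computes $\partial_tR_\ell^m$ and $\partial_tI_\ell^m$ from the defining formulas, substitutes the rows of \eqref{eq:Pncomplex} for $\psi_\ell^{\pm m}$, and re-expresses the right-hand side in real variables using $\overline{\psi_\ell^m}=(-1)^m\psi_\ell^{-m}$ and the fact that the recursion coefficients $a^m_\ell,\dots,f^m_\ell$ are real. The mechanism is the factor $i$ multiplying $\partial_y$ in \eqref{eq:Pncomplex}: combined with the factor $i$ in the definition of $I_\ell^m$ and with the conjugation identity, it is exactly what exchanges real and imaginary parts, so that the $\partial_y$-term lands in the opposite family while the $\partial_x$-term stays in the same family.

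I expect the only real obstacle to be the sign bookkeeping in this last step: for each of the four neighbours $(\ell\pm1,m\pm1)$ one must track how the sign of the corresponding entry in the $\partial_x$- versus the $\partial_y$-bracket of \eqref{eq:Pncomplex}, the factors $(-1)^m$ and $i$ in the definitions of $R_\ell^m$ and $I_\ell^m$, and the factor $(-1)^m$ in $\overline{\psi_\ell^m}=(-1)^m\psi_\ell^{-m}$ combine. It is cleanest to split into the cases $m=0$, $m=1$, and $m\ge 2$ (and likewise according to whether a shifted superscript equals $0$), because of the $1/\sqrt2$ normalization and the special role of the $m=0$ moments; within each case the verification is a short direct computation, and the displayed $P_3$ matrices $M_x^\text{real}$ and $M_y^\text{real}$ serve as a consistency check of the sign conventions.
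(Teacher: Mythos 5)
Your proposal is correct and follows essentially the same route as the paper, which states the lemma as a consequence of the preceding explicit construction $M_{x}^{\text{real}}=SM_x^{\text{complex}}S^{-1}$, $M_y^{\text{real}}=SM_y^{\text{complex}}S^{-1}$ without writing out the verification. You make the implicit argument explicit in the right way: the $(\ell,m)\to(\ell\pm1,m\pm1)$ checkerboard comes from the recursion relation in \eqref{eq:Pncomplex} with the $\partial_z$-term dropped, the block structure of $S$ over the pairs $\{\psi_\ell^m,\psi_\ell^{-m}\}$ preserves it, and the factor $i$ multiplying $\partial_y$ (real coefficients on $\partial_x$, purely imaginary on $\partial_y$) is exactly what keeps $\partial_x$ within the $R$-family and $I$-family while $\partial_y$ swaps them; note that for the lemma as stated only this zero/nonzero pattern matters, so the sign bookkeeping you defer to a case analysis is not actually needed.
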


The scattering matrix $C$ is diagonal and block-wise constant, and therefore invariant under the transformation \eqref{eq:transformation}. Consequently, the real-valued $P_N$ equations read
\begin{equation}
\label{eq:real_valued_pn_equations}
\partial_t\vec{u} + M_x^\text{real}\cdot\partial_x\vec{u}
+ M_y^\text{real}\cdot\partial_y\vec{u} + C\cdot\vec{u} = S\cdot\vec{q}\;,
\end{equation}
where $S\cdot\vec{q}$ contains the real-valued moments of the source $\vec{q}$. We also note that $M_x^\text{real}$ and $M_y^\text{real}$ are both symmetric.

\begin{lem}
\label{lem:L2norm_conservation}
In the absence of $C$, $q$, and boundaries, equation \eqref{eq:real_valued_pn_equations} conserves the global $L^2$ norm of the solution
\begin{equation}
\label{eq:L2_norm}
P[\vec{u}](t) = \prn{\int\int \vec{u}(t,x,y)^T\vec{u}(t,x,y) \ud{x}\ud{y}}^\frac{1}{2}
\end{equation}
over time.
\end{lem}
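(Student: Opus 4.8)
\emph{Proof proposal.} The plan is to run the standard energy estimate for symmetric hyperbolic systems, using exactly the symmetry of $M_x^\text{real}$ and $M_y^\text{real}$ noted just above. Setting $C=0$ and $\vec{q}=0$, equation \eqref{eq:real_valued_pn_equations} becomes $\partial_t\vec{u} + M_x^\text{real}\partial_x\vec{u} + M_y^\text{real}\partial_y\vec{u} = 0$. First I would differentiate $P[\vec{u}](t)^2 = \int\int\vec{u}^T\vec{u}\ud{x}\ud{y}$ under the integral sign and substitute the evolution equation, obtaining
\begin{equation*}
\tfrac{1}{2}\tfrac{\mathrm{d}}{\mathrm{d}t}P[\vec{u}]^2
= \int\int \vec{u}^T\partial_t\vec{u}\ud{x}\ud{y}
= -\int\int \vec{u}^T M_x^\text{real}\partial_x\vec{u}\ud{x}\ud{y}
  -\int\int \vec{u}^T M_y^\text{real}\partial_y\vec{u}\ud{x}\ud{y}\;.
\end{equation*}

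The crucial step is that, since $M_x^\text{real}$ is symmetric and constant in space, one has the pointwise identity $\vec{u}^T M_x^\text{real}\partial_x\vec{u} = \tfrac{1}{2}\partial_x\prn{\vec{u}^T M_x^\text{real}\vec{u}}$ (because $(\partial_x\vec{u})^T M_x^\text{real}\vec{u}$, being a scalar, equals its own transpose $\vec{u}^T M_x^\text{real}\partial_x\vec{u}$), and likewise $\vec{u}^T M_y^\text{real}\partial_y\vec{u} = \tfrac{1}{2}\partial_y\prn{\vec{u}^T M_y^\text{real}\vec{u}}$. Hence each of the two integrals on the right-hand side is the integral of a perfect spatial divergence. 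I would then invoke the divergence theorem (equivalently, integrate by parts in $x$ and in $y$ separately) to conclude that both integrals vanish: this is where the hypothesis ``in the absence of $\dots$ boundaries'' enters, interpreted as $\vec{u}(t,\cdot,\cdot)$ being defined on all of $\R^2$ with sufficient decay at infinity (or, alternatively, periodic on a box), so that the boundary contributions $\brk{\vec{u}^T M_x^\text{real}\vec{u}}$ and $\brk{\vec{u}^T M_y^\text{real}\vec{u}}$ drop out. Therefore $\tfrac{\mathrm{d}}{\mathrm{d}t}P[\vec{u}]^2 = 0$, so $P[\vec{u}](t)$ is constant in time.

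The only real obstacle is analytic bookkeeping rather than any structural difficulty: one must assume $\vec{u}$ is regular enough (a classical solution, or argue by density for weaker solutions) to differentiate under the integral sign and to perform the integration by parts, and one must fix precisely in what sense ``no boundaries'' is meant so that the boundary terms genuinely vanish. The algebraic heart of the argument — that the flux matrices are symmetric, hence $\vec{u}^T M\partial\vec{u}$ is an exact divergence — is already furnished by the remark preceding the lemma and needs nothing further.
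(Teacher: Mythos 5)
Your proposal is correct and follows essentially the same route as the paper's own proof: differentiate the squared $L^2$ norm, substitute the evolution equation, use the symmetry of $M_x^\text{real}$ and $M_y^\text{real}$ to recognize $\vec{u}^TM\partial\vec{u}$ as half of an exact spatial derivative, and conclude that the integral of a divergence vanishes in the absence of boundaries. Your additional remarks about regularity and the precise meaning of ``no boundaries'' are reasonable clarifications that the paper leaves implicit.
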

\begin{proof}
We calculate
\begin{equation*}
\begin{split}
\frac{\text{d}}{\text{d}t} P[\vec{u}]
&= \frac{\text{d}}{\text{d}t} \iint \vec{u}^T\vec{u} \ud{x}\ud{y}
= 2\iint \vec{u}^T \partial_t \vec{u} \ud{x}\ud{y} \\
&= -2\iint ( \vec{u}^TM_x\partial_x \vec{u} + \vec{u}^TM_y\partial_y\vec{u}) \ud{x}\ud{y} \\
&= -\iint \partial_x (\vec{u}^TM_x \vec{u}) + \partial_y (\vec{u}^TM_y\vec{u}) \ud{x}\ud{y}
= 0\;,
\end{split}
\end{equation*}
where the last equality is due to the fact that $M_x$ and $M_y$ are symmetric.
\end{proof}

\begin{rem}
The simplified $P_N$ ($SP_N$) equations derived in \cite{OlbrantLarsenFrankSeibold2013} have the same coupling pattern and can thus be solved with the same numerical scheme, presented in Sect.~\ref{sec:numerical_method}, as the $P_N$ equations.
\end{rem}

\begin{rem}
The question of proper boundary conditions for the $P_N$ equations (and in fact moment models in general) is unsolved. In one space dimension various approaches exist, most prominently Marshak \cite{Marshak1947} or Mark \cite{Mark1944,Mark1945} boundary conditions. For a further discussion and review, we refer the reader to \cite{LarsenPomraning1991}, where asymptotically correct boundary conditions are derived. However, in two and three space dimensions, there is no agreement on the best choice of boundary conditions. Therefore, in this paper we confine ourselves to two types of boundary conditions that are simple to implement: periodic and extrapolation, as described in Sect.~\ref{subsec:boundary_conditions}. It should be pointed out that in many model problems, very little radiation reaches the boundary of the computational domain, thus rendering the choice of boundary conditions irrelevant.
\end{rem}

\vspace{1.5em}
\section{Numerical Method}
\label{sec:numerical_method}
We now develop a numerical scheme for linear systems of hyperbolic balance laws of the form
\begin{equation}
\label{eq:hyperbolic_balance_law}
\partial_t\vec{u} + M_x\cdot\partial_x\vec{u} + M_y\cdot\partial_y\vec{u}
+C\cdot\vec{u} = \vec{q}\;,
\end{equation}
where the matrices $M_x$, $M_y$, and $C$ possess very specific patterns of their nonzero entries that admit the systematic placement of the components of the solution vector $\vec{u}(x,y,t)$ on staggered grids. Specifically, let the solution of \eqref{eq:hyperbolic_balance_law} have $\calN$ components, i.e., $\vec{u}(x,y,t)\in\mathbb{R}^\calN$, and the source $\vec{q}(x,y,t)\in\mathbb{R}^\calN$ and the matrices $M_x, M_y\in\mathbb{R}^{\calN\times\calN}$ and $C(x,y,t)\in\mathbb{R}^{\calN\times\calN}$ are of appropriate sizes. Moreover, the matrix $C(x,y,t)$ is diagonal, and the matrices $M_x$ and $M_y$ are constant-coefficient, and possess patterns of their nonzero entries, as described in Lemma~\ref{lem:sign_pattern_pn}. Hence, the real-valued $P_N$ equations \eqref{eq:real_valued_pn_equations} are covered, as are the $SP_N$ equations \cite{OlbrantLarsenFrankSeibold2013}.

\subsection{Spatial Approximation on Staggered Grids}
\label{subsec:spatial_approximation}
We consider the partial differential equation \eqref{eq:hyperbolic_balance_law} to hold in the interior of a rectangular computational domain $\Omega = (0,L_x)\times (0,L_y)$, and on each of the two boundary directions (horizontal and vertical), we allow for one of two types of boundary conditions: periodic or extrapolation, as described in more detail below. The domain $\Omega$ is divided into $n_x \times n_y$ rectangular cells of size $\Delta x \times \Delta y$, where $\Delta x = L_x/n_x$ and $\Delta y = L_y/n_y$. The center points of these cells then lie on the grid
\begin{equation}
\label{eq:grid11}
G_{11} = \{((i-\tfrac{1}{2})\Delta x,(j-\tfrac{1}{2})\Delta y) \,|\, i\in\{1,\dots,n_x\},\, j\in\{1,\dots,n_y\}\}\;.
\end{equation}
We always place the first component of the solution vector, i.e.\ the scalar flux $R_0^0$, on this cell-centered grid $G_{11}$. As an example, Figure~\ref{fig:staggered_grid} shows the division of the rectangular domain (light gray) into a $5\times 3$ arrangement of cells. The grid $G_{11}$ is depicted by gray circles. The key principle of the numerical scheme is to place the remaining solution components on grids that are staggered with $G_{11}$.

To reiterate, the condition on the nonzero entry patterns of $M_x$, $M_y$, and $C$, given in Lemma~\ref{lem:sign_pattern_pn}, can be reformulated as follows: the components of $\vec{u}$ can be distributed into four disjoint sets, according to $\{1,2,\dots,\calN\} = I_{11} \,{\cup}\, I_{21} \,{\cup}\, I_{12} \,{\cup}\, I_{22}$, such that the following properties hold:
\begin{equation}
\label{eq:matrix_conditions}
\begin{split}
(M_x)_{i,j} = 0\;\; &\forall\,(i,j)\notin ((I_{11}\times I_{21})
\cup (I_{21}\times I_{11}) \cup (I_{12}\times I_{22}) \cup (I_{22}\times I_{12}))\;, \\
(M_y)_{i,j} = 0\;\; &\forall\,(i,j)\notin ((I_{11}\times I_{12})
\cup (I_{12}\times I_{11}) \cup (I_{21}\times I_{22}) \cup (I_{22}\times I_{21}))\;, \\
C_{i,j} = 0\;\; &\forall\,(i,j)\notin ((I_{11}\times I_{11})
\cup (I_{21}\times I_{21}) \cup (I_{12}\times I_{12}) \cup (I_{22}\times I_{22}))\;.
\end{split}
\end{equation}
With this distribution of the indices of the solution components, we consider four fully staggered grids: $G_{11}$, defined above, and in addition
\begin{equation}
\label{eq:staggered_grids}
\begin{split}
G_{21} &= \{(i\Delta x,(j-\tfrac{1}{2})\Delta y) \,|\, i\in\{p_x,\dots,n_x\},\, j\in\{1,\dots,n_y\}\}\;, \\
G_{12} &= \{((i-\tfrac{1}{2})\Delta x,j\Delta y) \,|\, i\in\{1,\dots,n_x\},\, j\in\{p_y,\dots,n_y\}\}\;, \\
G_{22} &= \{(i\Delta x,j\Delta y) \,|\, i\in\{p_x,\dots,n_x\},\, j\in\{p_y,\dots,n_y\}\}\;,
\end{split}
\end{equation}
where
\begin{equation}
\label{eq:periodicity_flags}
p_x = \begin{cases} 0 & \text{extrapolation b.c.\ in $x$} \\ 1 & \text{periodic b.c.\ in $x$} \end{cases}
\quad\text{and}\quad
p_y = \begin{cases} 0 & \text{extrapolation b.c.\ in $y$} \\ 1 & \text{periodic b.c.\ in $y$} \end{cases}\;.
\end{equation}
In Fig.~\ref{fig:staggered_grid}, the grid $G_{21}$ is depicted by gray top-pointing triangles, the grid $G_{12}$ by gray right-pointing triangles, and the grid $G_{22}$ by gray squares.

\begin{figure}
\centering
\begin{minipage}[b]{.75\textwidth}
\includegraphics[width=\textwidth]{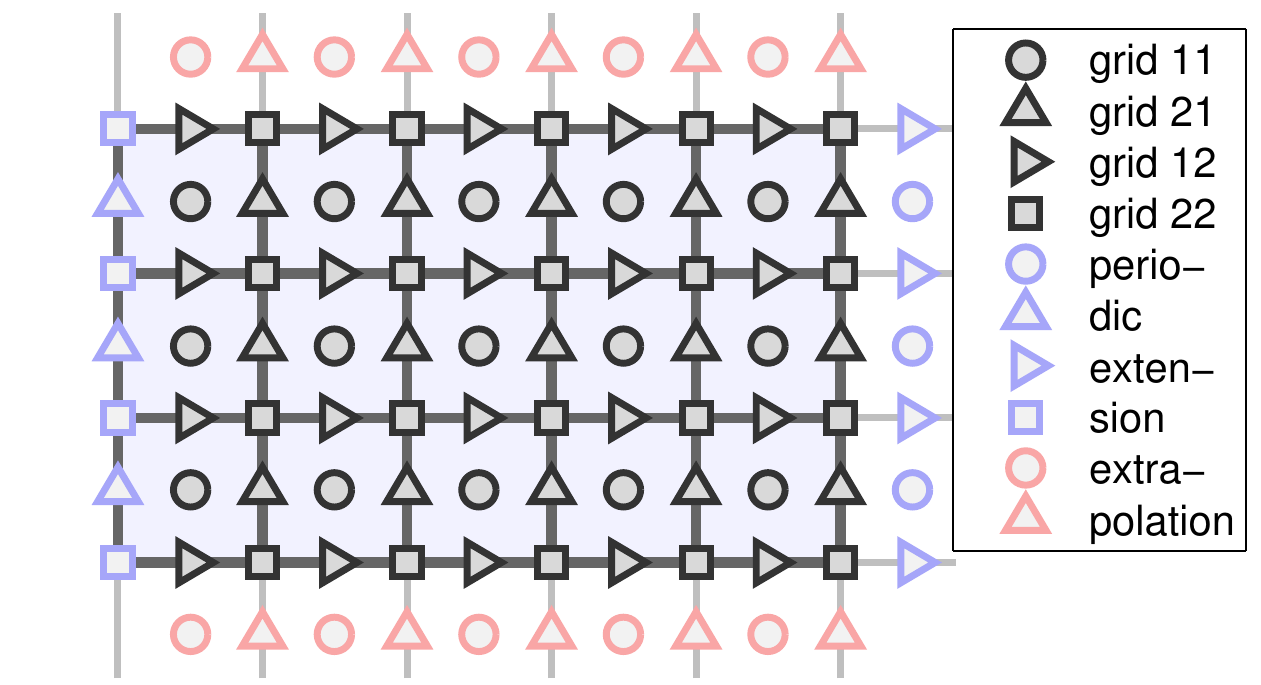}
\end{minipage}
\vspace{-.6em}
\caption{Staggered grid of $5\times 3$ grid cells, with periodic b.c.\ in the $x$-direction, and extrapolation b.c.\ in the $y$-direction. Shown are solution grid points (black boundaries), periodic extension points (light blue), and extrapolation ghost points (light red).}
\label{fig:staggered_grid}
\end{figure}

Having defined these fully staggered grids, the solution components with indices in $I_{k\ell}$ are assigned to the corresponding grid $G_{k\ell}$, where $k,\ell\in\{1,2\}$. On these staggered grids, spatial derivatives of a function $w$ can be approximated by the half-grid central difference approximations
\begin{equation}
\label{eq:central_differences}
\begin{split}
\partial_x w(i\Delta x,j\Delta y) &\approx \tfrac{1}{\Delta x}
\left(w((i+\tfrac{1}{2})\Delta x,j\Delta y)-w((i-\tfrac{1}{2})\Delta x,j\Delta y)\right)
\;\forall\,i,j\in\tfrac{1}{2}\mathbb{Z}\;, \\
\partial_y w(i\Delta x,j\Delta y) &\approx \tfrac{1}{\Delta y}
\left(w(i\Delta x,(j+\tfrac{1}{2})\Delta y)-w(i\Delta x,(j-\tfrac{1}{2})\Delta y)\right)
\;\forall\,i,j\in\tfrac{1}{2}\mathbb{Z}\;,
\end{split}
\end{equation}
and we denote the resulting finite difference operators $D_x$ and $D_y$, respectively. With these finite difference approximations, $x$-derivatives of components on the grid $G_{k\ell}$ are associated with the grid $G_{3-k,\ell}$, and $y$-derivatives of components on the grid $G_{k\ell}$ are placed on the grid $G_{k,3-\ell}$, where $k,\ell\in\{1,2\}$. The nonzero entry patterns \eqref{eq:matrix_conditions} guarantee that the distribution of the indices of $\vec{u}$ into the sets $I_{11}$, $I_{21}$, $I_{12}$, and $I_{22}$ is exactly reproduced by the distribution of the indices of $M_x\cdot D_x\vec{u} + M_y\cdot D_y\vec{u}+C\cdot\vec{u}$. Moreover, the components of the source vector $\vec{q}$ in \eqref{eq:hyperbolic_balance_law} are placed on the same grids as the corresponding components of the solution vector.

\subsection{Boundary Conditions}
\label{subsec:boundary_conditions}
In each of the two coordinate directions, we prescribe one of two types of boundary conditions. In the $x$-axis direction these conditions, and their implementation, are as follows:
\begin{itemize}
\item\textbf{Periodic:} The solution satisfies $\vec{u}(x+L_x,y,t) = \vec{u}(x,y,t)$. On the staggered grids, periodicity is implemented by ``wrapping around'' the grid data, by defining: for all components $k\in G_{21}\cup G_{22}$ set $u_k(0,j\Delta y) := u_k(L_x,j\Delta y)\;\forall\, j\in I_y$, and for all components $k\in G_{11}\cup G_{12}$ set $u_k(L_x+\frac{1}{2}\Delta x,j\Delta y) := u_k(\frac{1}{2}\Delta x,j\Delta y)\;\forall\, j\in I_y$, where $I_y = \{\frac{1}{2}p_y,\frac{1}{2}p_y+\frac{1}{2},\dots,n_y-\frac{1}{2},n_y\}$.
\vspace{.5em}
\item\textbf{Extrapolation:} All solution components with $k\in G_{11}\cup G_{12}$ satisfy $\partial_x u_k = 0$ at the left and the right boundary. On the staggered grids, this condition is implemented by constant extrapolation, i.e., by ``copying'' grid data onto ghost points, by defining: $u_k(-\frac{1}{2}\Delta x,j\Delta y) := u_k(\frac{1}{2}\Delta x,j\Delta y)\;\forall\, j\in I_y$, and $u_k(L_x+\frac{1}{2}\Delta x,j\Delta y) := u_k(L_x-\frac{1}{2}\Delta x,j\Delta y)\;\forall\, j\in I_y$, where $I_y = \{\frac{1}{2}p_y,\frac{1}{2}p_y+\frac{1}{2},\dots,n_y-\frac{1}{2},n_y\}$.
\end{itemize}
The definition and implementation of boundary conditions in the $y$-axis direction works analogously. Figure~\ref{fig:staggered_grid} shows an example with periodic b.c.\ in the $x$-direction (grid data that is wrapped around is depicted in blue) and extrapolation b.c.\ in the $y$-direction (ghost points are depicted in red).

\subsection{Time Stepping}
The time-derivative in \eqref{eq:hyperbolic_balance_law} is resolved by bootstrapping, i.e., data on the grids $G_{11}\cup G_{22}$ is updated alternatingly with data on the grids $G_{21}\cup G_{12}$. This approach is natural, since in the approximate advective part of \eqref{eq:hyperbolic_balance_law}, i.e., $M_x\cdot D_x\vec{u} + M_y\cdot D_y\vec{u}$, the solution components in $I_{11}\cup I_{22}$ (called the ``even components'') depend solely on the components in $I_{21}\cup I_{12}$ (called the ``odd components''), and vice versa. One possibility to perform bootstrapping is by staggering the data in time, i.e., when using a time step $\Delta t$, the even components would be defined at times $n\Delta t$, while the odd components would be defined at times $(n+\frac{1}{2})\Delta t$. While such a space-time-staggered approach yields very elegant update rules, the placement of different solution components at different times causes inconveniences in the implementation of initial conditions, as well as the evaluation of the full solution at a fixed time. Here, we circumvent these problems by defining all solution components on the same time-grid $n\Delta t$, and conducing a time step from $t$ to $t+\Delta t$ via a Strang splitting \cite{Strang1968} of sub-steps, as follows.

For the purpose of the presentation in this section, consider the solution vector be written in block form
\begin{equation*}
\vec{u} = \begin{bmatrix} \vec{u}^\text{e} \\ \vec{u}^\text{o} \end{bmatrix}\;,
\end{equation*}
where $\vec{u}^\text{e}$ is the vector of the even components, and $\vec{u}^\text{o}$ is the vector of the odd components. Moreover, let the same block form apply to the source vector $\vec{q}$ and the matrices $M_x$, $M_y$, and $C$. Then, with the central difference approximations \eqref{eq:central_differences}, equation \eqref{eq:hyperbolic_balance_law} turns into
\begin{equation}
\label{eq:hyperbolic_balance_law_block_form}
\partial_t\begin{bmatrix} \vec{u}^\text{e} \\ \vec{u}^\text{o} \end{bmatrix}
+ \begin{bmatrix} 0 & M_x^\text{eo} \\ M_x^\text{oe} & 0 \end{bmatrix}
\cdot D_x
\begin{bmatrix} \vec{u}^\text{e} \\ \vec{u}^\text{o} \end{bmatrix}
+ \begin{bmatrix} 0 & M_y^\text{eo} \\ M_y^\text{oe} & 0 \end{bmatrix}
\cdot D_y
\begin{bmatrix} \vec{u}^\text{e} \\ \vec{u}^\text{o} \end{bmatrix}
+ \begin{bmatrix} C^\text{e} & 0 \\ 0 & C^\text{o} \end{bmatrix}
\cdot
\begin{bmatrix} \vec{u}^\text{e} \\ \vec{u}^\text{o} \end{bmatrix}
= \begin{bmatrix} \vec{q}^{\,\text{e}} \\ \vec{q}^{\,\text{o}} \end{bmatrix}\;.
\end{equation}
We now define two evolution operators: one that updates the even components only, while ``freezing'' the odd components; and one that updates the odd components only, while ``freezing'' the even components. To advance the even (odd) components from $t$ to $t+\Delta t$, we consider the odd (even) components to be constant on the interval $[t,t+\Delta t]$. The same holds true for the matrix $C$ and the source vector $\vec{q}$. Any quantity that is ``frozen'' during an update from $t$ to $t+\Delta t$ is associated with the intermediate time $t+\tfrac{1}{2}\Delta t$. Specifically, the matrix $C$ and the vector $\vec{q}$ are evaluated at $t+\tfrac{1}{2}\Delta t$.

With these assumptions, the block system \eqref{eq:hyperbolic_balance_law_block_form} decouples into two ODEs
\begin{equation}
\label{eq:hyperbolic_balance_law_decoupled}
\begin{cases}
\partial_t\vec{u}^\text{e}+C^\text{e}\cdot\vec{u}^\text{e} = \vec{r}^{\,\text{e}} \\
\partial_t\vec{u}^\text{o}+C^\text{o}\cdot\vec{u}^\text{o} = \vec{r}^{\,\text{o}}
\end{cases}
\quad\text{with}\quad\
\begin{cases}
\vec{r}^{\,\text{e}} = \vec{q}^{\,\text{e}}-M_x^\text{eo}\cdot D_x\vec{u}^\text{o}
-M_y^\text{eo}\cdot D_y\vec{u}^\text{o} \\
\vec{r}^{\,\text{o}} = \vec{q}^{\,\text{o}}-M_x^\text{oe}\cdot D_x\vec{u}^\text{e}
-M_y^\text{oe}\cdot D_y\vec{u}^\text{e}
\end{cases}\;,
\end{equation}
where $C^\text{e}$, $C^\text{o}$, $\vec{r}^{\,\text{e}}$, and $\vec{r}^{\,\text{o}}$ are time-independent. Moreover, since the matrix $C = \linebreak \text{diag}(c_1,\dots,c_\calN)$ is diagonal, the equations in \eqref{eq:hyperbolic_balance_law_decoupled} decouple into scalar equations of the form
\begin{equation}
\label{eq:scalar_ODE}
\partial_\tau u_k(x,y,\tau) + \bar{c}_k(x,y)u_k(x,y,\tau) = \bar{r}_k(x,y)
\end{equation}
that need to be solved from $\tau = t$ until $\tau = t+\Delta t$. In \eqref{eq:scalar_ODE}, we have $\bar{c}_k(x,y) = c_k(x,y,t+\tfrac{1}{2}\Delta t)$ and $\bar{r}_k(x,y) = r_k(x,y,t+\tfrac{1}{2}\Delta t)$. The exact solution of \eqref{eq:scalar_ODE} is
\begin{align}
\label{eq:sub_step_solution_division}
u_k(x,y,t+\Delta t)
&= \exp(-\bar{c}_k(x,y)\Delta t)u_k(x,y,t)
-\tfrac{1}{\bar{c}_k(x,y)}(1-\exp(-\bar{c}_k(x,y)\Delta t))\bar{r}_k(x,y) \\
\label{eq:sub_step_solution}
&= u_k(x,y,t) + \Delta t\prn{\bar{r}_k(x,y) - \bar{c}_k(x,y) u_k(x,y,t)}
E(-\bar{c}_k(x,y)\Delta t)\;,
\end{align}
where $E(c) = \tfrac{\exp(c)-1}{c}$. Note that, given a robust implementation of this function $E$ (see Sect.~\ref{sec:implementation_matlab}), the representation \eqref{eq:sub_step_solution} has an important advantage over formula \eqref{eq:sub_step_solution_division}, namely it is defined in voids, where $c_k(x,y) = 0$.

Returning to the block form of the solution, we now let $\bar{C}^\text{e}$ and $\bar{C}^\text{o}$ denote the diagonal matrices containing the $\bar{c}_k$ of the even and odd components, respectively. Moreover, the matrices $E(-\bar{C}^\text{e}\Delta t)$ and $E(-\bar{C}^\text{o}\Delta t)$ are the diagonal matrices containing $E(-\bar{c}_k\Delta t)$ of the even and odd components, respectively. Using these notations, formula \eqref{eq:sub_step_solution} gives rise to the even evolution operator
\begin{equation}
\label{eq:solution_operator_even}
S_{t+\Delta t,t}^\text{e}
\begin{bmatrix} \vec{u}^\text{e} \\ \vec{u}^\text{o} \end{bmatrix}
= \begin{bmatrix} \vec{u}^\text{e} \\ \vec{u}^\text{o} \end{bmatrix}
+\Delta t
\begin{bmatrix} E(-\bar{C}^\text{e}\Delta t)
(\vec{r}^{\,\text{e}}-\bar{C}^\text{e}\cdot\vec{u}^\text{e}) \\ 0 \end{bmatrix}
\end{equation}
as the one that advances the even components according to \eqref{eq:sub_step_solution}, and the odd evolution operator
\begin{equation}
\label{eq:solution_operator_odd}
S_{t+\Delta t,t}^\text{o}
\begin{bmatrix} \vec{u}^\text{e} \\ \vec{u}^\text{o} \end{bmatrix}
= \begin{bmatrix} \vec{u}^\text{e} \\ \vec{u}^\text{o} \end{bmatrix}
+\Delta t
\begin{bmatrix} 0 \\ E(-\bar{C}^\text{o}\Delta t)
(\vec{r}^{\,\text{o}}-\bar{C}^\text{o}\cdot\vec{u}^\text{o}) \end{bmatrix}
\end{equation}
as the one that advances the odd components according to \eqref{eq:sub_step_solution}. A full solution step from $t$ to $t+\Delta t$ is then achieved via four sub-half-steps
\begin{equation}
\label{eq:solution_operator_full_step}
S_{t+\Delta t,t} =
S_{t+\frac{1}{2}\Delta t,t}^\text{o}\circ
S_{t+\frac{1}{2}\Delta t,t}^\text{e}\circ
S_{t+\frac{1}{2}\Delta t,t}^\text{e}\circ
S_{t+\frac{1}{2}\Delta t,t}^\text{o}\;.
\end{equation}
Note that in general $S_{t+\frac{1}{2}\Delta t,t}^\text{e} \circ S_{t+\frac{1}{2}\Delta t,t}^\text{e} \neq S_{t+\Delta t,t}^\text{e}$, as one can see from the solution formula \eqref{eq:sub_step_solution}.

\subsection{Accuracy}
\label{subsec:accuracy}
Due to the proper setup of the solution components on staggered grids, all spatial differential operators are approximated in a central fashion, and thus with second order accuracy, i.e., the spatial truncation error is $O(h^2)$, where $h = \max\{\Delta x,\Delta y\}$. Moreover, due to the symmetry in the arrangement of the sub-half-steps in \eqref{eq:solution_operator_full_step}, the splitting error in a single time step is $O(\Delta t^3)$, and henceforth the global temporal truncation error due to the fractional steps is $O(\Delta t^2)$ \cite{Strang1968}. This second order accuracy in time is preserved because all temporal evaluations are done in a symmetric fashion (at the half-step time $t+\tfrac{1}{2}\Delta t$). The overall second order accuracy (i.e., the truncation error is $O(\Delta t^2)+O(h^2)$) is confirmed by the numerical results in Sect.~\ref{subsec:verification}.

\subsection{Stability}
Here, we show the $L^2$ stability at the heart of the approach, namely the combination of staggered spatial grids with the temporal splitting \eqref{eq:solution_operator_full_step}. We conduct the analysis in a simplified case, namely: the $P_1$ system (with advection coefficients set to 1), in one space dimension, with constant coefficients, and without a source (i.e., $\vec{q} = 0$). In this case our solution vector consists of two scalar fields $\vec{u} = \begin{bmatrix} u^\text{e} & u^\text{o} \end{bmatrix}^T$, which satisfy the equations
\begin{equation*}
\begin{cases}
\partial_t u^\text{e} + \partial_x u^\text{o} + c^\text{e}u^\text{e} &= 0 \\
\partial_t u^\text{o} + \partial_x u^\text{e} + c^\text{o}u^\text{o} &= 0\;.
\end{cases}
\end{equation*}
We now conduct a Von Neumann stability analysis of the numerical scheme presented above. To that end, let the numerical solution at time $t$ be represented in terms of its Fourier coefficients
\begin{equation*}
u^\text{e}(x,t) = \sum_{k}a^\text{e}_k(t) e^{ikx}
\quad\text{and}\quad
u^\text{o}(x,t) = \sum_{k}a^\text{o}_k(t) e^{ikx}\;.
\end{equation*}
Note that this particular form of the Fourier representation holds for a periodic domain of length $2\pi$. However, the arguments below transfer to any periodic domain, as well as to an infinite domain (Cauchy problem, for which the sums would turn into integrals).

A step of the numerical scheme \eqref{eq:solution_operator_full_step} in general affects all the Fourier coefficients $a^\text{e}_k$ and $a^\text{o}_k$. However, due to the linearity of the update rule, no mixing occurs between Fourier modes of different frequency, i.e., the coefficients $a^\text{e}_k(t+\Delta t)$ and $a^\text{o}_k(t+\Delta t)$ depend only on the coefficients $a^\text{e}_k(t)$ and $a^\text{o}_k(t)$ for this particular $k$. Consequently, it suffices to investigate the growth factor for basic wave solutions
\begin{equation}
\label{eq:basic_waves}
u^\text{e}(x,t) = a^\text{e}_k(t) e^{ikx}
\quad\text{and}\quad
u^\text{o}(x,t) = a^\text{o}_k(t) e^{ikx}\;.
\end{equation}
Let $u^\text{e}$ be defined on the grid $hj,\,j\in\mathbb{Z}$ and $u^\text{o}$ be defined on the grid $h\,(j+\frac{1}{2}),\,j\in\mathbb{Z}$. Then the (staggered) grid solution values are
\begin{equation*}
u^\text{e}(hj,t) = a^\text{e}_k(t) e^{ikhj}
\quad\text{and}\quad
u^\text{o}(h\,(j+\tfrac{1}{2}),t) = a^\text{o}_k(t) e^{ikh\,(j+\frac{1}{2})}\;,
\end{equation*}
and consequently the staggered grid derivatives are
\begin{align*}
(D_x u^\text{e})(h\,(j+\tfrac{1}{2}),t)
&= a^\text{e}_k(t) \, \frac{e^{ikh\,(j+1)}-e^{ikhj}}{h}
= 2\frac{i}{h}\sin(kh/2) e^{ikh\,(j+\frac{1}{2})} a^\text{e}_k(t)
\intertext{and}
(D_x u^\text{o})(hj,t)
&= a^\text{o}_k(t) \, \frac{e^{ikh\,(j+\frac{1}{2})}-e^{ikh\,(j-\frac{1}{2})}}{h}
= 2\frac{i}{h}\sin(kh/2) e^{ikhj} a^\text{o}_k(t)\;.
\end{align*}
Hence, for the basic waves \eqref{eq:basic_waves} on the staggered grids, the even half-step solution operator \eqref{eq:solution_operator_even} updates the Fourier coefficients by the linear operation
\begin{equation*}
\begin{bmatrix} a^\text{e}_k \\ a^\text{o}_k \end{bmatrix}\!(t+\tfrac{1}{2}\Delta t)
= \underbrace{\begin{bmatrix} d^\text{e} & f^\text{e} \\ 0 & 1 \end{bmatrix}}_{= G^\text{e}_{\frac{1}{2}\Delta t}}
\cdot\begin{bmatrix} a^\text{e}_k \\ a^\text{o}_k \end{bmatrix}\!(t)\;,
\end{equation*}
where $d^\text{e} = \exp(-\frac{1}{2}c^\text{e}\Delta t)$ and $f^\text{e} = -i\frac{\Delta t}{h} E(-\frac{1}{2}c^\text{e}\Delta t) \sin(kh/2)$. Similarly, the odd half-step solution operator \eqref{eq:solution_operator_odd} acts on the Fourier coefficients as
\begin{equation*}
\begin{bmatrix} a^\text{e}_k \\ a^\text{o}_k \end{bmatrix}\!(t+\tfrac{1}{2}\Delta t)
= \underbrace{\begin{bmatrix} 1 & 0 \\ f^\text{o} & d^\text{o} \end{bmatrix}}_{= G^\text{o}_{\frac{1}{2}\Delta t}}
\cdot\begin{bmatrix} a^\text{e}_k \\ a^\text{o}_k \end{bmatrix}\!(t)\;,
\end{equation*}
where $d^\text{o} = \exp(-\frac{1}{2}c^\text{o}\Delta t)$ and $f^\text{o} = -i\frac{\Delta t}{h} E(-\frac{1}{2}c^\text{o}\Delta t) \sin(kh/2)$. Consequently, a full solution time step \eqref{eq:solution_operator_full_step} acts on the Fourier coefficients as
\begin{align}
\nonumber
\begin{bmatrix} a^\text{e}_k \\ a^\text{o}_k \end{bmatrix}\!(t+\Delta t)
&= G^\text{o}_{\frac{1}{2}\Delta t}\cdot G^\text{e}_{\frac{1}{2}\Delta t}
\cdot G^\text{e}_{\frac{1}{2}\Delta t}\cdot G^\text{o}_{\frac{1}{2}\Delta t}
\cdot\begin{bmatrix} a^\text{e}_k \\ a^\text{o}_k \end{bmatrix}\!(t) \\
\label{eq:growth_factor}
&= \underbrace{\begin{bmatrix}
(d^\text{e})^2 + (d^\text{e}+1)f^\text{e}f^\text{o} &
d^\text{o} (d^\text{e}+1)f^\text{e} \\
(d^\text{o}+(d^\text{e})^2)f^\text{o} + (d^\text{e}+1)f^\text{e}(f^\text{o})^2 &
(d^\text{o})^2 + d^\text{o}(d^\text{e}+1)f^\text{e}f^\text{o}
\end{bmatrix}}_{= G}
\cdot\begin{bmatrix} a^\text{e}_k \\ a^\text{o}_k \end{bmatrix}\!(t)\;.
\end{align}
The eigenvalues of the growth factor matrix $G$ are
\begin{equation*}
\lambda_{1,2} = g\pm\sqrt{g^2-(d^\text{e}d^\text{o})^2}\;,
\end{equation*}
where $g = \frac{1}{2}\!\prn{(d^\text{e})^2 + (d^\text{o})^2 + (d^\text{e}+1)(d^\text{o}+1)f^\text{e}f^\text{o}}$ is half the trace of $G$. Since $d^\text{e}$ and $d^\text{o}$ are real, and $f^\text{e}$ and $f^\text{o}$ are purely imaginary, the half-trace $g$ is real.

\begin{thm}
\label{thm:stability}
If $\Delta t<h$, the time stepping \eqref{eq:solution_operator_full_step} yields a stable scheme.
\end{thm}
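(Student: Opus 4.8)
The plan is a Von Neumann argument: show that the $2\times 2$ amplification matrix $G$ from \eqref{eq:growth_factor} is uniformly power-bounded in the wave number $k$ precisely when $\Delta t<h$. I would start from the spectral data already assembled above: $\lambda_{1,2}=g\pm\sqrt{g^2-(d^\text{e}d^\text{o})^2}$ with $g$, $d^\text{e}$, $d^\text{o}$ real and $f^\text{e}$, $f^\text{o}$ purely imaginary. First record the elementary facts that $c^\text{e},c^\text{o}\ge 0$ gives $d^\text{e}=\exp(-\tfrac12c^\text{e}\Delta t)\in(0,1]$, $d^\text{o}\in(0,1]$, and that $E(x)=\tfrac{e^x-1}{x}\in(0,1]$ for $x\le 0$, so $E(-\tfrac12c^\text{e}\Delta t),E(-\tfrac12c^\text{o}\Delta t)\in(0,1]$. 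Hence, writing $\sigma=\tfrac{\Delta t}{h}\sin(kh/2)$, the quantity $\beta:=-f^\text{e}f^\text{o}=\sigma^2\,E(-\tfrac12c^\text{e}\Delta t)\,E(-\tfrac12c^\text{o}\Delta t)$ satisfies $0\le\beta\le\sigma^2\le(\Delta t/h)^2<1$, where the \emph{strict} inequality is exactly the point where the hypothesis $\Delta t<h$ enters. Moreover $g=\tfrac12((d^\text{e})^2+(d^\text{o})^2)-\tfrac12(d^\text{e}+1)(d^\text{o}+1)\beta$.

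The core of the proof is the spectral bound $\rho(G)\le 1$. Since the characteristic polynomial is $\lambda^2-2g\lambda+(d^\text{e}d^\text{o})^2$ with real coefficients, the Schur--Cohn criterion for a quadratic reduces this to $(d^\text{e}d^\text{o})^2\le 1$ (automatic) and $2\abs{g}\le 1+(d^\text{e}d^\text{o})^2$. The bound $2g\le 1+(d^\text{e}d^\text{o})^2$ needs no CFL condition: $2g\le(d^\text{e})^2+(d^\text{o})^2$ because $\beta\ge 0$, and $(d^\text{e})^2+(d^\text{o})^2\le 1+(d^\text{e}d^\text{o})^2$ is just $(1-(d^\text{e})^2)(1-(d^\text{o})^2)\ge 0$. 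For $-2g\le 1+(d^\text{e}d^\text{o})^2$ I would use $-2g=(d^\text{e}+1)(d^\text{o}+1)\beta-(d^\text{e})^2-(d^\text{o})^2$ together with the elementary lemma $(1+d)E(-\gamma)\le 1+d^2$ for $\gamma\ge 0$, $d=e^{-\gamma}$ --- equivalent to $1-e^{-2\gamma}\le\gamma(1+e^{-2\gamma})$, which holds since the difference vanishes at $\gamma=0$ and has derivative $1-(1+2\gamma)e^{-2\gamma}\ge 0$. Applying the lemma to both factors and using the identity $(1+(d^\text{e})^2)(1+(d^\text{o})^2)=1+(d^\text{e})^2+(d^\text{o})^2+(d^\text{e}d^\text{o})^2$ gives $(d^\text{e}+1)(d^\text{o}+1)\beta\le\sigma^2(1+(d^\text{e})^2)(1+(d^\text{o})^2)<1+(d^\text{e})^2+(d^\text{o})^2+(d^\text{e}d^\text{o})^2$, hence $-2g<1+(d^\text{e}d^\text{o})^2$, the last step using $\sigma^2<1$. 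Thus $\rho(G)\le 1$ whenever $\Delta t<h$, and $\rho(G)=1$ only if $\beta=0$ and ($d^\text{e}=1$ or $d^\text{o}=1$), i.e.\ at a constant Fourier mode with vanishing absorption in one family of components.

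It remains to upgrade $\rho(G)\le 1$ to uniform power-boundedness, the only genuinely delicate step. When $c^\text{e}=c^\text{o}=0$ --- the sole situation producing a modulus-one eigenvalue at a non-constant mode --- I would use that $N:=G-gI$ satisfies $N^2=(g^2-1)I$ by the characteristic equation, so with $g=\cos\phi$ one gets the closed form $G^n=\cos(n\phi)\,I+\tfrac{\sin n\phi}{\sin\phi}N$ and therefore $\norm{G^n}\le 1+\norm{N}/\abs{\sin\phi}$; a direct computation gives $\norm{N}=2\abs{\sigma}$ and $\abs{\sin\phi}=2\abs{\sigma}\sqrt{1-\sigma^2}$, whence $\norm{N}/\abs{\sin\phi}\le(1-(\Delta t/h)^2)^{-1/2}$, a bound independent of $k$ and $n$ (and visibly the reason the CFL bound must be strict: at $\Delta t=h$ the mode $kh=\pi$ yields a genuine Jordan block at $\lambda=-1$, with $\norm{G^n}$ growing linearly). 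When instead $c^\text{e}>0$ or $c^\text{o}>0$ one has $d^\text{e}d^\text{o}<1$, and the three sub-cases (complex eigenvalues: $\abs{\lambda_{1,2}}=d^\text{e}d^\text{o}$; real eigenvalues with $g\ge 0$: $\rho(G)\le\max((d^\text{e})^2,(d^\text{o})^2)$; real eigenvalues with $g<0$: $\rho(G)\le\sigma^2$) show $\rho(G)$ is bounded away from $1$ uniformly in $k$ away from $\sin(kh/2)=0$, and near $\sin(kh/2)=0$ the matrix $G$ tends to the diagonal contraction $\mathrm{diag}((d^\text{e})^2,(d^\text{o})^2)$, so diagonalizability with well-conditioned eigenvectors again yields a uniform bound. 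Collecting these estimates gives $\sup_{n,k}\norm{G^n}<\infty$, i.e.\ stability.

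The easy half is the eigenvalue bound $\rho(G)\le 1$, which is essentially the Schur--Cohn computation plus the one scalar lemma on $E$. The hard part I expect is the last paragraph: controlling $G$ near the degenerate modes where its two eigenvalues coalesce on the unit circle, which is exactly where the strict inequality $\Delta t<h$ (rather than $\Delta t\le h$) is indispensable.
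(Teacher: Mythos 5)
Your proposal is correct and stays within the same Von Neumann framework that the paper sets up (the $2\times2$ growth matrix $G$ of \eqref{eq:growth_factor}, its eigenvalues $\lambda_{1,2}=g\pm\sqrt{g^2-(d^\text{e}d^\text{o})^2}$, and the reality of $g$), but the actual argument is organized quite differently. Where the paper splits into four cases according to the size and sign of $g$ relative to $d^\text{e}d^\text{o}$ and, in the hardest case $g<-d^\text{e}d^\text{o}$, leaves the decisive inequality to be ``verified by expanding the power series of the function $E$,'' you reduce the whole spectral bound to the Schur--Cohn/Jury conditions $(d^\text{e}d^\text{o})^2\le 1$ and $2\abs{g}\le 1+(d^\text{e}d^\text{o})^2$, and you dispatch the difficult sign with the clean, fully proved scalar lemma $(1+e^{-\gamma})E(-\gamma)=\tfrac{1-e^{-2\gamma}}{\gamma}\le 1+e^{-2\gamma}$; the hypothesis $\Delta t<h$ enters through $\sigma^2<1$ exactly where it does in the paper. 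Your treatment of the degenerate unit-modulus modes is in fact \emph{more} complete than the paper's: the paper only rules out a Jordan block at $\abs{\lambda}=1$ and defers power-boundedness to an unproved assertion in Remark~\ref{rem:discrete_L2_norm}, whereas your closed form $G^n=\cos(n\phi)\,I+\tfrac{\sin n\phi}{\sin\phi}N$ with $\norm{N}/\abs{\sin\phi}\le(1-(\Delta t/h)^2)^{-1/2}$ gives a bound on $\norm{G^n}$ that is genuinely uniform in $k$ and $n$ in the zero-decay case, and makes transparent why the inequality must be strict. The one place where you are sketchier than the rest of your argument is the final paragraph for $c^\text{e}>0$ or $c^\text{o}>0$: the asserted bound $\rho(G)\le\sigma^2$ for real eigenvalues with $g<0$ is not justified, and ``well-conditioned eigenvectors near $\sin(kh/2)=0$'' is claimed rather than proved. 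Since the paper itself never establishes uniform power-boundedness at all, this does not leave you short of the paper's own standard, but if you keep that paragraph you should either prove those two claims or weaken the statement to the per-mode spectral bound the paper actually proves.
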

\begin{proof}
We consider four cases.

\noindent\textbf{Case} $\abs{g} = d^\text{e}d^\text{o}$. \hspace{3em}
In this case both eigenvalues coincide, i.e., $\lambda_1 = \lambda_2 = g$. If a decay is present in the governing equations (i.e., $c^\text{e}>0$ or $c^\text{o}>0$), then $\abs{g} < 1$, and the scheme is absolutely stable. In turn, if no decay is present, one must check that the growth factor matrix $G$ cannot develop a Jordan block. The case $g = 1$ is equivalent to $G$ being the identity matrix, which represents the constant modes that remain unchanged; and the case $g = -1$ cannot occur, because $\frac{\Delta t}{h} < 1$, and thus $\abs{f^\text{e}f^\text{o}} < 1$. Having covered the case of coinciding eigenvalues, in the remaining cases it suffices to show that $\abs{\lambda_{1,2}}\le 1$.

\smallskip\noindent\textbf{Case} $\abs{g} < d^\text{e}d^\text{o}$. \hspace{3em}
One can observe that both eigenvalues $\lambda_{1,2}$ are complex, with real part $g$, and imaginary part $\pm\sqrt{(d^\text{e}d^\text{o})^2-g^2}$. Therefore, their magnitude satisfies
\begin{equation*}
\abs{\lambda_{1,2}}^2 = g^2+\prn{(d^\text{e}d^\text{o})^2-g^2}
= (d^\text{e}d^\text{o})^2 \le 1\;,
\end{equation*}
i.e., the stability condition is satisfied.

\smallskip\noindent\textbf{Case} $g > d^\text{e}d^\text{o} \ge 0$. \hspace{3em}
By directly estimating $g$ we have
\begin{equation*}
\begin{split}
g^2-(d^\text{e}d^\text{o})^2 &= (g - d^\text{e}d^\text{o})(g + d^\text{e}d^\text{o})
\le \tfrac{1}{2}\!\prn{(d^\text{e})^2 + (d^\text{o})^2 - 2d^\text{e}d^\text{o}}
\tfrac{1}{2}\!\prn{(d^\text{e})^2 + (d^\text{o})^2 + 2d^\text{e}d^\text{o}} \\
&= \tfrac{1}{4}(d^\text{e}-d^\text{o})^2 (d^\text{e}+d^\text{o})^2\;,
\end{split}
\end{equation*}
and therefore
\begin{equation*}
\lambda_2 = g+\sqrt{g^2-(d^\text{e}d^\text{o})^2}
\le \tfrac{1}{2}\!\prn{(d^\text{e})^2 + (d^\text{o})^2 + \abs{d^\text{e}-d^\text{o}}(d^\text{e}+d^\text{o})}
= \max\!\left\{(d^\text{e})^2,(d^\text{o})^2\right\} \le 1\;.
\end{equation*}
Moreover, since $g\ge 0$, we have that $\lambda_1 \ge g-g = 0$, hence the stability condition is satisfied.

\smallskip\noindent\textbf{Case} $g < -d^\text{e}d^\text{o} \le 0$. \hspace{3em}
Clearly, $\lambda_2 \le g+\abs{g} = 0$, thus stability is satisfied if $\lambda_1\ge -1$, which is equivalent to the condition $1+(d^\text{e}d^\text{o})^2+2g \ge 0$. The half-trace $g$ depends on the wave number $k$. It achieves its smallest value if $\sin(kh/2)^2 = 1$, thus we can estimate (using the CFL condition $\Delta t/h \le 1$)
\begin{equation*}
\begin{split}
2g &\ge
(d^\text{e})^2 + (d^\text{o})^2
- (1+d^\text{e})E(-\tfrac{1}{2}c^\text{e}\Delta t)\,
(1+d^\text{o})E(-\tfrac{1}{2}c^\text{o}\Delta t) \\
&= (d^\text{e})^2 + (d^\text{o})^2
- \frac{1-(d^\text{e})^2}{\tfrac{1}{2}c^\text{e}\Delta t}\,
\frac{1-(d^\text{o})^2}{\tfrac{1}{2}c^\text{o}\Delta t} \\
&= \exp(-c^\text{e}\Delta t) + \exp(-c^\text{o}\Delta t)
- 4\, \frac{1-\exp(-c^\text{e}\Delta t)}{c^\text{e}\Delta t}\,
\frac{1-\exp(-c^\text{o}\Delta t)}{c^\text{o}\Delta t} \\
&= 2 - \prn{
c^\text{e}\Delta t E(-c^\text{e}\Delta t) + c^\text{o}\Delta t E(-c^\text{o}\Delta t)
+ 4E(-c^\text{e}\Delta t)E(-c^\text{o}\Delta t)
}\;,
\end{split}
\end{equation*}
and therefore (using the notation $t^\text{e} = c^\text{e}\Delta t$ and $t^\text{o} = c^\text{o}\Delta t$) we obtain
\begin{equation*}
1+(d^\text{e}d^\text{o})^2+2g
\ge 4 - 2t^\text{e} E(-t^\text{e}) - 2t^\text{o} E(-t^\text{o})
+ (t^\text{e}t^\text{o}-4) E(-t^\text{e})E(-t^\text{o})
\ge 0\;,
\end{equation*}
where this last inequality can be verified by expanding the power series of the function $E$. This shows that the presented scheme is conditionally stable, under the condition $\Delta t < h$.
\end{proof}

\begin{rem}
As the proof of Thm.~\ref{thm:stability} shows, in the presence of decay terms the scheme is stable under the usual CFL condition $\Delta t \le h$. However, if no decay is present (i.e., one has plain wave propagation), the choice $\Delta t = h$ would lead to the amplification of waves with wave number $k = \frac{\pi}{h}$, i.e., oscillations of a period of two grid cells.
\end{rem}

\begin{rem}
\label{rem:discrete_L2_norm}
It should be stressed that in the absence of decay terms, the preceding proof of stability does \emph{not} imply that the discrete $L^2$ norm of the approximate solution,
\begin{equation}
\label{eq:discrete_L2_norm}
\hat{P}[\vec{u}]
= \prn{\sum_j u^\text{e}(hj)^2 + \sum_j u^\text{o}(h(j+\tfrac{1}{2}))^2}^\frac{1}{2}\;,
\end{equation}
is conserved in time (even though the true solution conserves the $L^2$ norm, see Lemma~\ref{lem:L2norm_conservation}). The reason is that the growth factor matrix $G$, given in \eqref{eq:growth_factor}, is not Hermitian. Since it is diagonalizable (for $\Delta t < h$) with both eigenvalues of magnitude 1, it is power-bounded, i.e., $\|G^k\|_2\le C$, where the constant $C$ is independent of $k$. However, $\|G\|_2>1$, and therefore the discrete $L^2$ norm will generally \eqref{eq:discrete_L2_norm} oscillate in time, with a small amplitude (see the test case in Sect.~\ref{subsec:variations_L2_norm}).
\end{rem}

In analogy with similar types of problems and numerical approaches, one can expect that the stability results of Thm.~\ref{thm:stability} carry over to the general case, albeit with the following modifications:
\begin{itemize}
\item By Duhamel's principle, problems with source terms can be interpreted as a superposition of many homogeneous problems with new initial values. Hence, the stability results carry over to equations with sources.
\item For systems with more than two components that pre-multiply the spatial derivative by a matrix $M$, the maximum admissible time step has to be scaled by the inverse of the largest (in magnitude) eigenvalue of $M$.
\item In two space dimensions, an extra factor of $\frac{1}{2}$ in front of the time step will account for the presence of growth rates in each of the two spatial dimensions.
\item The case of variable coefficients is not covered by von-Neumann analysis. However, as the proof of stability shows, larger decay terms relax the requirements for stability. Hence, it is plausible that the variable coefficient case does not exhibit worse stability properties than a constant coefficient case with the same minimal decay rates.
\end{itemize}
The numerical results in Sect.~\ref{sec:numerical_results} confirm the stability of the method in the general case.

\subsection{Important Advantages of the Methodology}
Besides its aforementioned second order accuracy (in space and time), an important advantage of the presented approach is its structural simplicity and regularity: in each time step and at every grid point, exactly the same operations are performed (albeit with different coefficients), thus giving rise to an efficient vectorization (see Sect.~\ref{sec:implementation_matlab}), and possibly parallelization (see Outlook). Another advantage of the approach is its stable treatment of large decay coefficients (i.e., problems with large absorption and/or scattering): due to the exact solution \eqref{eq:sub_step_solution} of the sub-step ODE \eqref{eq:scalar_ODE}, large values of the decay coefficients $c_k$ do not impose restrictions on the time step. This is in contrast to explicit time-stepping rules (such as Runge-Kutta methods), which would incur severe time step restrictions for such stiff problems.

\subsection{Limitations of the Methodology}
\label{subsec:limitations}
The simplicity of the approach incurs a few fundamental limitations, and the user of \textsf{StaRMAP} should be aware of these limitations when using the code. Most prominently, the use of the $P_N$ closure (or $SP_N$) gives rise to the Gibbs phenomenon (see Sect.~\ref{sec:introduction}), i.e., the exact solution of the moment system may develop oscillations that are absent in the solution of the original radiative transfer equation \eqref{eq:RTE}. As one consequence, the vector of moments may not be realizable, i.e., the moments cannot stem from a non-negative density. This phenomenon is well-known to users of spherical harmonics moment equations, and it is particularly demonstrated in the line source test case in Sect.~\ref{subsec:line_source}.

There is also a Gibbs phenomenon due to having a linear second order numerical scheme, which due to Godunov's limitation theorem \cite{Godunov1959} cannot be monotone, and thus spurious oscillations tend to occur near strong gradients of the solution. These overshoots could be avoided through limiters. However, the addition of limiters would go at the expense of the structural simplicity and efficiency of the method, and they are therefore not included in \textsf{StaRMAP}.

Another limitation of the approach is that its simplicity stems from the regularity in the geometry. Hence, a generalization of the method to irregular or locally refined meshes is not possible without sacrificing some fundamental structural advantages. Similarly, since the method's overall second order accuracy is based on exploiting local symmetries, a generalization to higher orders is impossible without introducing major modifications.

Finally, the temporal splitting could generate spurious oscillations in the case of strong spatial gradients in the material parameters. While the exact treatment of the sub-step ODE \eqref{eq:scalar_ODE} successfully deals with uniformly large decay rates, extreme gradients in the absorption and/or scattering coefficients could trigger $O(1)$ spurious waves, if the time step is not suitably reduced. Again, it is important that the user of \textsf{StaRMAP} be aware of this possibility when attempting to solve problems with large material parameters that in addition exhibit strong gradients or jumps.

\vspace{1.5em}
\section{Implementation in Matlab}
\label{sec:implementation_matlab}
The \textsf{StaRMAP} project consists of four types of \texttt{m}-files:
\begin{itemize}
\item the solver file \texttt{starmap\_solver.m};
\item files that create moment matrices, such as \texttt{starmap\_closure\_pn.m};
\item example files, such as \texttt{starmap\_ex\_checkerboard.m}; and
\item example creation files, such as \texttt{starmap\_create\_mms.m}.
\end{itemize}
The philosophy is that in all ``usual'' cases, the solver file does not require any modification. Similarly, the files that create moment matrices remain unchanged (the user could add additional types of moment closures via new \texttt{starmap\_closure\_*.m} files, though). What the user provides and modifies are the example files (in which a problem is defined), and/or the example creation files (which create example files). The example files then call the solver file to run the computation. Below, we describe the various components of the implementation in detail.

\subsection{Definition of a Test Case}
Each test case is encoded in an example file (such as \texttt{starmap\_ex\_checkerboard.m}), which defines the problem and then calls the solver file. The communication is achieved via the \textsc{Matlab} struct \texttt{par}, which is created in the example file and then passed to the solver file. This problem parameter can contain
the problem name (\texttt{par.name}),
the moment matrices (\texttt{par.Mx} and \texttt{par.My}),
the moment order index vector (\texttt{par.mom\_order}),
the coordinates of the computational domain (\texttt{par.ax}),
the numbers of grid cells in each coordinate direction (\texttt{par.n}),
the type of boundary conditions (\texttt{par.bc}),
the times of plotting (\texttt{par.t\_plot}),
the final time (\texttt{par.tfinal}),
the CFL number (\texttt{par.cn}),
the list of moments sent to the plotting routine (\texttt{par.mom\_output}),
as well as function handles for
absorption (\texttt{par.sigma\_a}),
scattering (\texttt{par.sigma\_s0} and \texttt{par.sigma\_sm}),
the source (\texttt{par.source}),
initial conditions (\texttt{par.ic}),
and a problem-specific plotting routine (\texttt{par.output}). Any of these parameters that is not provided is set to a default value by the solver file.

A typical \textsf{StaRMAP} example file consists of three parts: (i) the definition of the problem parameters that differ from the default, including function handles; (ii) the creation of the moment matrices (by calling \texttt{starmap\_closure\_pn.m} for $P_N$ and \texttt{starmap\_closure\_spn.m} for $SP_N$) and the call of the solver; and (iii) the definition of the problem-specific functions.

\subsection{Data Structures for Staggered Grids}
At the core of the \textsc{Matlab} implementation of the method developed in Sect.~\ref{sec:numerical_method} is the fact that the differential and evolution operators perform the same type of operation at each grid point. We therefore store each field quantity and each component of the solution vector as a two-dimensional array (which is identical to a matrix in \textsc{Matlab}), in which the first component represents the $x$-index and the second component the $y$-index of the respective grid. Hence, in line with the staggered grids \eqref{eq:grid11} and \eqref{eq:staggered_grids}, every component in the set $I_{11}$ is a matrix of size $n_x\times n_y$, and the components in $I_{21}$, $I_{12}$, and $I_{22}$ are of sizes $(n_x+1-p_x)\times n_y$, $n_x\times (n_y+1-p_y)$, and $(n_x+1-p_x)\times (n_y+1-p_y)$, respectively, where $p_x$ and $p_y$ are the periodicity flags defined in \eqref{eq:periodicity_flags}.

\subsection{Initialization}
A significant part of the \textsf{StaRMAP} code is devoted to the initialization of the data structures. While many of these steps are technical (mostly to ensure that \textsc{Matlab} handles the memory in an efficient fashion), some steps are of conceptual importance:
\begin{itemize}
\item
The staggered grid index sets \texttt{c11}, \texttt{c22}, \texttt{c21}, and \texttt{c12} (representing the respective sets $I_{11}$, $I_{21}$, $I_{12}$, and $I_{22}$ are created automatically from the structure of the moment matrices \texttt{par.Mx} and \texttt{par.My}. This is done by first placing the first solution component in \texttt{c11}, and then continuing to
place components in appropriate sets, whenever a nonzero entry in the matrices \texttt{par.Mx} and \texttt{par.My} requires this, until all components are distributed into the index sets.
\item
The maximal admissible time step is determined by computing the largest (in magnitude) eigenvalue of the moment matrix $M_x$ via the expression
\begin{verbatim}
    abs(eigs(par.Mx,1,'lm'))
\end{verbatim}
Note that it is assumed that the moment system preserves rotational symmetry, and thus the eigenvalues of $M_x$ are identical to the eigenvalues of $n_x M_x + n_y M_y$ for all $n_x^2+n_y^2 = 1$.
\item
The code segment
\begin{verbatim}
    extendx = {[1:par.bc(1),1:n1(1),par.bc(1)*(n1(1)-1)+1],...
        [n2(1)*(1:1-par.bc(1)),1:n2(1)]};
    extendy = {[1:par.bc(2),1:n1(2),par.bc(2)*(n1(2)-1)+1],...
        [n2(2)*(1:1-par.bc(2)),1:n2(2)]};
\end{verbatim}
creates the index vectors (two cell components each) that encode the boundary conditions. The definition of these four index vectors is the only place in the code where the boundary conditions enter.
\item
In the case of isotropic scattering, the vector \texttt{par.mom\_order} is modified to point at the first moment order only, resulting in the three-dimensional cell array of the material parameters to be of length 1 in the third component.
\end{itemize}

\subsection{Spatial Derivatives}
In the \textsf{StaRMAP} code, the moments are assembled in a cell structure \texttt{U}, where the $j^\text{th}$ component is $\texttt{U\{j\}}$. Note that components that are defined on different grids may be of different sizes. The half-grid central difference formulas \eqref{eq:central_differences} applied to the component \texttt{U\{j\}} are therefore computed as
\begin{verbatim}
    dxU{j} = diff(U{j}(extendx{1},:),[],1)/h(1);
    dyU{j} = diff(U{j}(:,extendy{1}),[],2)/h(2);
\end{verbatim}
where the vector \texttt{h} contains the grid spacing $\Delta x$ and $\Delta y$ as components. As described in Sect.~\ref{subsec:spatial_approximation}, if the component \texttt{U\{j\}} is defined on the grid $G_{k\ell}$, then \texttt{dxU\{j\}} is defined on $G_{3-k,\ell}$ and \texttt{dyU\{j\}} is defined on $G_{k,3-\ell}$. Since the \texttt{diff} function reduces one of the array's dimension by 1, the array \texttt{U\{j\}} may need to be extended, before finite differencing is applied. This is encoded in the index vectors \texttt{extendx\{1\}}, \texttt{extendx\{2\}}, \texttt{extendy\{1\}}, and \texttt{extendy\{2\}}, as explained above.

\subsection{Update of Even and Odd Components}
Each time step consists of four half steps. The even solution operator \eqref{eq:solution_operator_even} updates the components on the even grids $G_{11}$ and $G_{22}$, by using spatial derivatives of quantities on the odd grids $G_{21}$ and $G_{21}$; and vice versa for the odd solution operator \eqref{eq:solution_operator_odd}. One step with the even (odd) solution operator consists therefore of two parts: first, spatial differences of the odd (even) components are computed (see above); second, the even (odd) components are updated, via the code segment
\begin{verbatim}
    W = -sumcell([dxU(Ix{j}),dyU(Iy{j})],...
        [par.Mx(j,Ix{j}),par.My(j,Iy{j})]);
    U{j} = U{j}+dt/2*(W+Q{j}-...
        sT{gtx(j),gty(j),par.mom_order(j)}.*U{j}).*...
        ET{gtx(j),gty(j),par.mom_order(j)};
\end{verbatim}
Here the array \texttt{W} contains all $D_x$ and $D_y$ finite differences that influence a certain component, weighted by the negative moment matrices $M_x$ and $M_y$. The call of the function \texttt{sumcell} achieves this task efficiently by only considering derivatives that correspond to actual nonzero entries in the matrices $M_x$ and $M_y$. Since the moment matrices derived in Sect.~\ref{sec:PN_2d} have an $O(1)$ number of nonzero entries per row, this guarantees that the computational effort of the code grows linearly with the number of moments.

The update rule for the $j^\text{th}$ moment is an immediate implementation of the update formula \eqref{eq:sub_step_solution}. The array \texttt{Q\{j\}} represents the $j^\text{th}$ moment of the source term, and the arrays \texttt{sT\{gtx(j),gty(j),par.mom\_order(j)\}} and \texttt{ET\{gtx(j),gty(j),par.mom\_order(j)\}} encode the decay quantity $c_j(x,y)$ and $E(-c_j(x,y)\frac{\Delta t}{2})$, respectively. Here, \texttt{sT} stands for $\Sigma_t$, and \texttt{ET} denotes the function $E$ being applied to $\Sigma_t$. There is a small modification to this update rule for $j=1$: since the zeroth moment is unaffected by scattering, the update is conducted with the fields \texttt{sA} (which stands for $\Sigma_a$) and \texttt{EA} (which is the function $E$ applied to $\Sigma_a$).

Structurally, the decay terms \texttt{sT} and \texttt{ET} are arranged in three dimensional cell arrays. The first two components encode the grid index in each coordinate direction (1 for odd, and 2 for even). The third component encodes the moment order index $\ell$, given in the index vector \texttt{par.mom\_order}, which encodes a mapping from the system component index to the moment order $\ell$. Since different moments of the same moment order $\ell$ possess the same decay parameters, this structure guarantees that the number of decay quantities grows only linearly in the moment \emph{order} $N$.

\subsection{expm1div}
The function $E(c) = \frac{\exp(c)-1}{c}$, used in the solution formula \eqref{eq:sub_step_solution}, requires a careful implementation. For values of $c$ away from zero, this formula can be implemented as given. However, for $\abs{c}\ll 1$ the difference and division may lead to severely amplified round-off errors, which to leading order equal $\frac{\delta}{c}$, where $\delta$ is the machine accuracy. In the \textsf{StaRMAP} code (function \texttt{expm1div}), the exponential formula is therefore replaced by the Taylor series approximation $E(c) = 1-\frac{1}{2}c+\frac{1}{6}c^2$ for $\abs{c}\le 2\times 10^{-4}$. While this series approximation does not yield any amplified round-off errors, it possesses an approximation error that to leading order equals $\frac{1}{24}c^3$. For double-precision arithmetics, the total errors incurred with these two evaluation formulas match for $\abs{c}\approx 2\times 10^{-4}$, at a value of less than $10^{-12}$.

\subsection{Evaluation of Material Parameters and Source}
In order for the aforementioned update step to have the decay quantities \texttt{sA} and \texttt{sT}, as well as the source \texttt{Q}, available, these quantities must be evaluated before the update. While the \textsf{StaRMAP} code can treat rather general problem setups (anisotropic and/or time-dependent parameters), an important focus lies on its efficiency in special situations, and a significant part of the code is devoted to properly addressing this demand, as follows:
\begin{itemize}
\item
Scattering, absorption, and the source term are each evaluated once in the beginning of each time step, at time $t+\frac{1}{2}\Delta t$. However, if any of these quantities is actually time-independent, then its evaluation occurs in the first time step only.
\item
The scattering is divided into its isotropic component (encoded in \texttt{par.sigma\_s0}) and the anisotropic component (encoded in \texttt{par.sigma\_sm}), where $m$ stands for the moment order (see above). For a problem with isotropic scattering, one simply does not provide the function \texttt{par.sigma\_sm}. In turn, if this function is provided, then it is evaluated for all moment orders $1,\dots,N$. Note that the function \texttt{par.sigma\_sm} is never evaluated for $m=0$; scattering never influences the zeroth moment.
\item
The source function \texttt{par.source} allows for a component for each moment. However, since in many applications sources are acting on only a few moments, one can encode the indices of these moments in the vector \texttt{par.source\_ind}. Specifically, an isotropic source would apply only to the zeroth moment, in which case \texttt{par.source\_ind = 1}.
\end{itemize}

\subsection{Time Stepping}
Each time step consists of three parts: (i) the evaluation of material parameters and sources, if applicable; (ii) the application of the four half-step solution operators, given in \eqref{eq:solution_operator_full_step}; and (iii) the plotting of the solution, if applicable.

In the application of the half-step operators, the even solution operator $S_{t+\frac{1}{2}\Delta t,t}^\text{e}$ is applied twice in succession. While one cannot replace these two half-steps by one full step (if one did, one would lose the second order in time), the finite differences need to be evaluated only once. This is done in the \textsf{StaRMAP} code, thus reducing the number of operations. Note that in the special case of a fully time-independent problem, one could apply the same trick to the odd solution operator (combined over two successive time steps). Since this case is quite rare, this trick is not implemented.

The solution (or, in most cases, the zeroth moment of the solution) is plotted at all times given in the vector \texttt{par.t\_plot}. At the end of each time step, for all times in \texttt{par.t\_plot} that lie in the interval $(t,t+\Delta t]$, the solution is computed via linear interpolation in time $u(t_\text{plot}) = (1-\lambda)u(t)+\lambda u(t+\Delta t)$, where $\lambda = \frac{t_\text{plot}-t}{\Delta t}$. This preserves the second order accuracy of the solution. With this approach, the computation is completely unaffected by the output of the solution.

\vspace{1.5em}
\section{Numerical Results}
\label{sec:numerical_results}

\subsection{Verification via Manufactured Solutions}
\label{subsec:verification}
We use the method of manufactured solutions \cite{SalariKnupp2000} to verify our implementation and to validate the accuracy predictions about the numerical approach made in Sect.~\ref{subsec:accuracy}. To that end, a routine is implemented that uses \textsc{Matlab}'s symbolic toolbox to automatically compute the source vector $\vec{q}(x,y,t)$ that generates a prescribed solution under prescribed material coefficients. We select a smooth, but spatially and temporally varying, solution and smooth, but non-constant, coefficients. Moreover, the solution and the coefficients (and consequently the sources) are periodic on the rectangular computational domain.

As an example, we consider the $P_3$ model for $t\in [0,0.5]$ on the domain $(x,y)\in [0,1]\times[0,1]$. We use a time- and space-dependent absorption coefficient $\Sigma_a(t,x) = t\cos(2\pi x_2)$, and anisotropic Henyey-Greenstein scattering \cite{HenyeyGreenstein1941} with $\Sigma_{s\ell} = 0.9^\ell$. The manufactured solution is
\begin{equation*}
R_0^0(t,x) = e^{-t}\sin^2(2\pi x_1)
\end{equation*}
with all other moments being zero.

The problem is computed for 4 different grid resolutions, and at the final time the difference between the approximate and the true solution for the 10 moments is evaluated. Figure~\ref{fig:convergence} displays the spatial $L^1$, $L^2$, and $L^\infty$ errors as functions of the grid resolution. The four panels show the errors in the scalar flux $R_0^0$ (top left), the first order moments $R_1^1$, $I_1^1$ (top right), the second order moments $R_2^2$, $I_2^2$, $R_2^0$ (bottom left), and the third order moments $R_3^3$, $I_3^3$, $R_3^1$, $I_3^1$ (bottom right). All moments exhibit the expected second order convergence.

\begin{figure}
\centering
\begin{minipage}[b]{.90\textwidth}
\includegraphics[width=\textwidth]{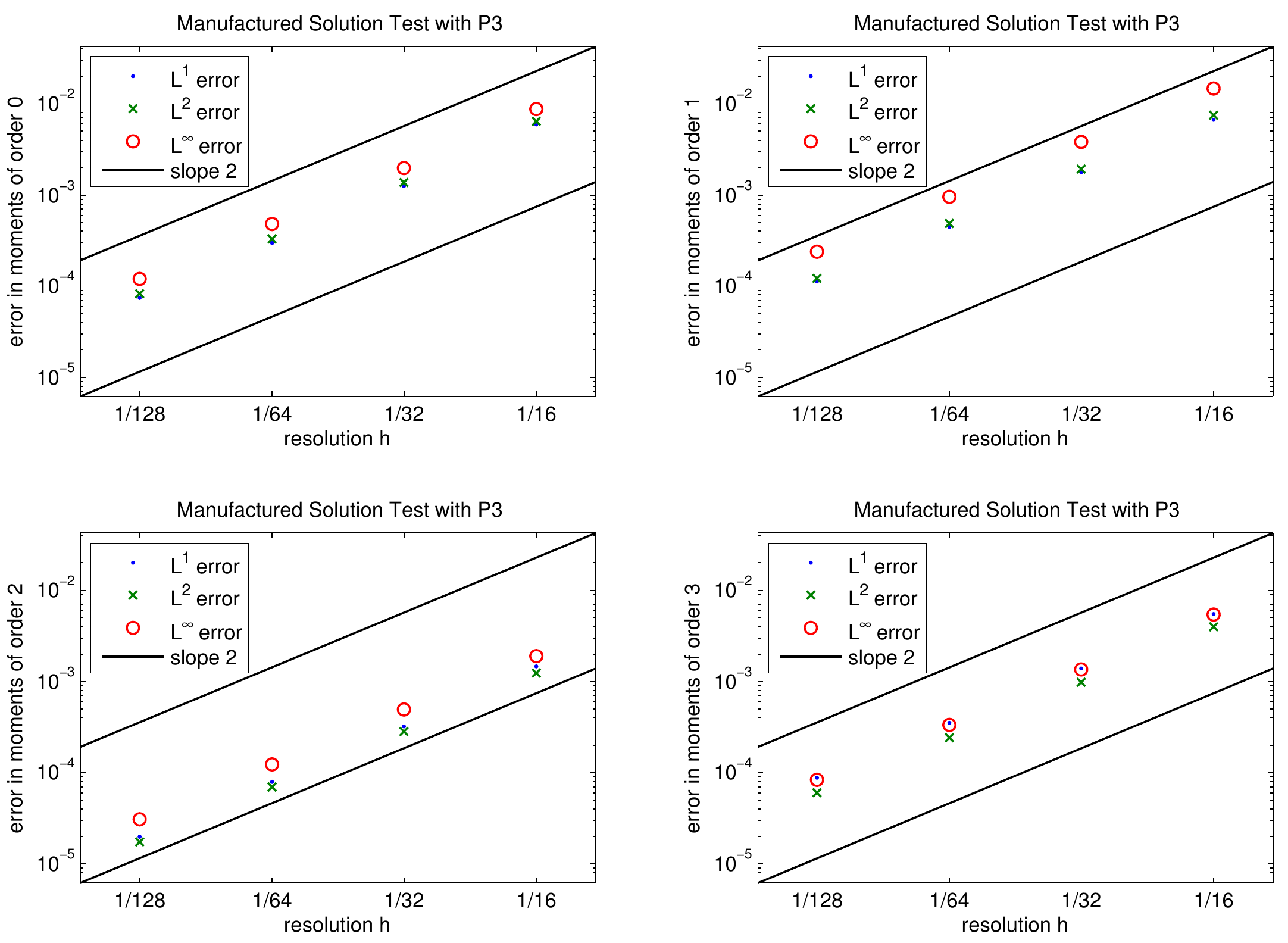}
\end{minipage}
\vspace{-.6em}
\caption{Convergence study with manufactured solution. Errors (in $L^1$, $L^2$, and $L^\infty$) as functions of the grid resolution $h$. Top left: zeroth moment; top right: first order moments; bottom left: second order moments; bottom right: third order moments.}
\label{fig:convergence}
\end{figure}

In the \textsf{StaRMAP} project, the ``manufacturing'' of a solution, i.e., the computation of a source term that generates a prescribed solution, in done via the file \texttt{starmap\_create\_mms.m}. This m-file generates the actual example file \texttt{starmap\_ex\_mms\_auto.m}, which then can be executed to produce the plot shown in Fig.~\ref{fig:convergence}.

\subsection{Variations in the Discrete $L^2$ Norm}
\label{subsec:variations_L2_norm}
In the absence of absorption, scattering, sources, and boundary conditions, the true solution of the $P_N$ equations conserves the $L^2$ norm \eqref{eq:L2_norm} exactly (see Lemma~\ref{lem:L2norm_conservation}). In contrast, as pointed out in Remark~\ref{rem:discrete_L2_norm}, the numerical scheme does not conserve the discrete $L^2$ norm \eqref{eq:discrete_L2_norm} of the approximate solution exactly.

In this test we study the magnitude of temporal variations of the discrete $L^2$ norm of the numerical solution. We consider the $P_5$ equations in the domain $[-1,1]\times [-1,1]$ with periodic boundary conditions, and the following initial conditions. The scalar flux $R_0^0$ is a Gaussian \eqref{eq:Gaussian} with $\sigma = 10^{-2}$, and all other moments are zero initially. The waves propagate in a void without any sources. The problem is computed on three different resolutions: a $50\times 50$ grid, a $100\times 100$ grid, and a $200\times 200$ grid.

\begin{figure}
\centering
\begin{minipage}[b]{.95\textwidth}
\includegraphics[width=\textwidth]{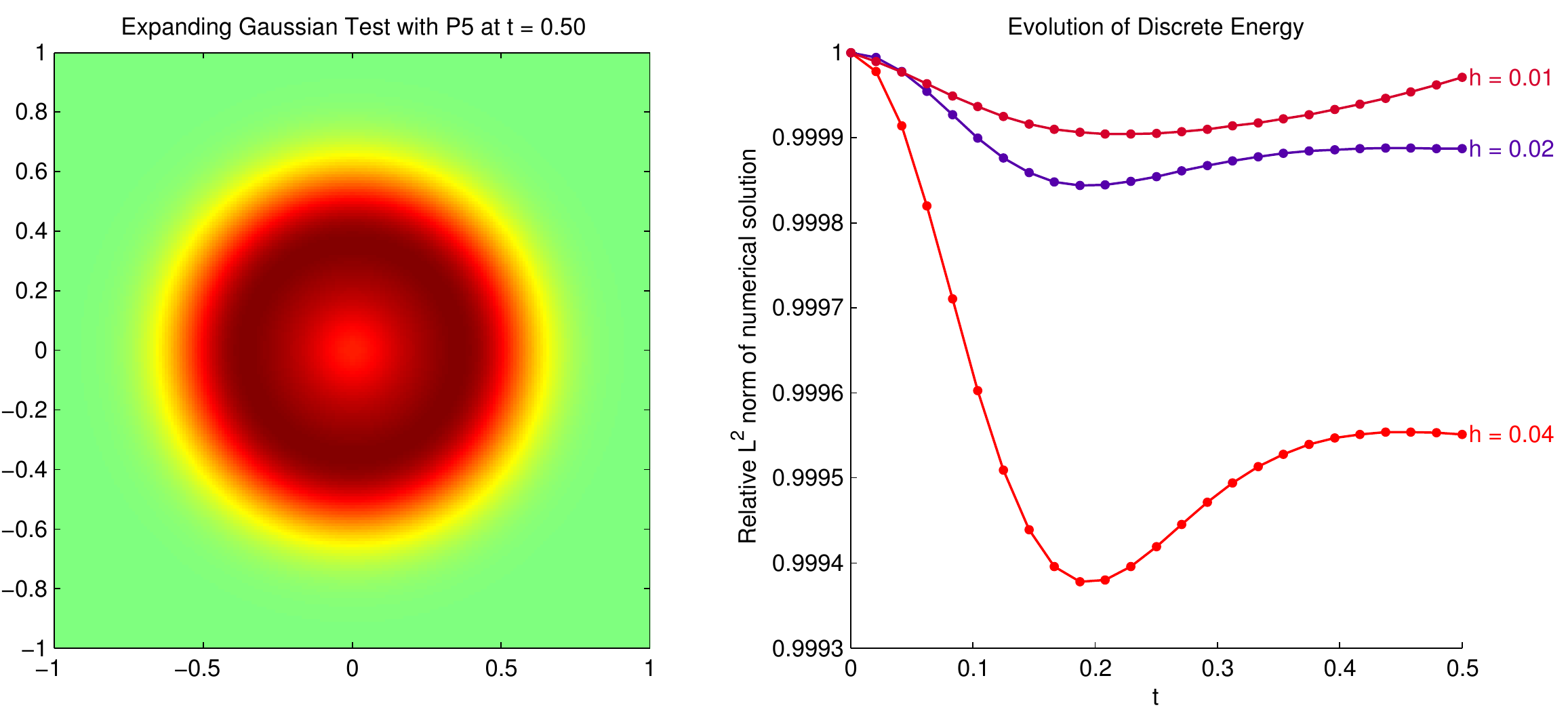}
\end{minipage}
\vspace{-.6em}
\caption{Investigation of the temporal variations in the discrete $L^2$ norm of the numerical solution. An initial Gaussian spreading in a void (left panel) is computed with the $P_5$ moment system. For three different mesh resolutions, the temporal evolution of the $L^2$ norm is recorded (right panel). One can observe that (a) the variations in the discrete $L^2$ norm are small (less than 0.02\% on a $100\times 100$ grid); and (b) the variations diminish as the grid is refined.}
\label{fig:l2norm}
\end{figure}

Figure~\ref{fig:l2norm} displays the results of this test. The scalar flux at the final time ($t = 0.5$) is shown in the left panel, and the temporal evolution of the discrete $L^2$ norm is shown in the right panel (normalized to the value 1 at $t = 0$). The results confirm that the discrete $L^2$ norm is not conserved exactly. However, they also demonstrate that its variations are very small: less than 0.02\% on a $100\times 100$ grid. Moreover, the amount of variation decays to zero as the grid is refined. In the \textsf{StaRMAP} project, this test is implemented in the example file \texttt{starmap\_ex\_l2norm.m}.

\subsection{Checkerboard}
We consider the checkerboard problem described in \cite{BrunnerHolloway2005}. The computational domain is a square of size $[0,7]\times [0,7]$ where the majority of the region is purely scattering. In the middle of the lattice system, in the square $[3,4]\times [3,4]$, an isotropic source $q=1$ is continuously generating particles. Additionally, there are eleven small squares of side length 1 of purely absorbing spots in which $\Sigma_a=10=\Sigma_t$. In the rest of the domain, $\Sigma_a=0$, $\Sigma_t=1$. Figure~\ref{fig:checkerboard_geometry} illustrates the problem setting more precisely.

\begin{figure}
\centering
\begin{minipage}[b]{.40\textwidth}
\includegraphics[width=\textwidth]{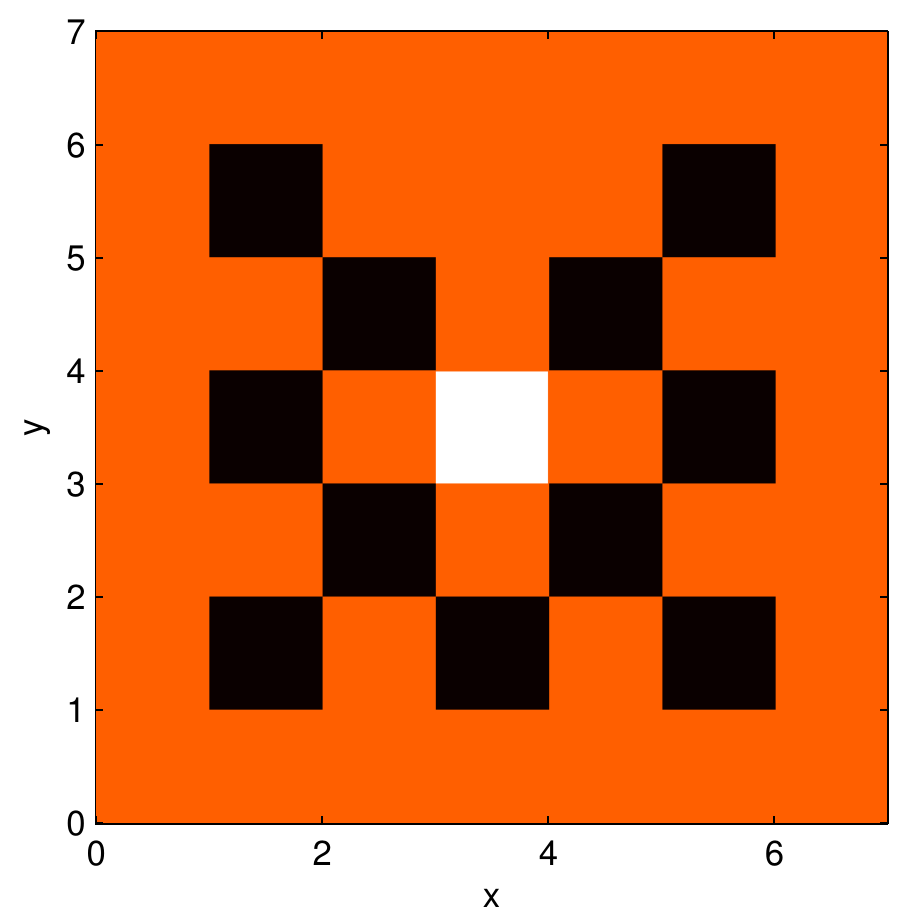}
\end{minipage}
\vspace{-.6em}
\caption{Checkerboard problem, material coefficients: isotropic source (white), purely scattering $\Sigma_{s0}=1=\Sigma_t$ (orange and white), purely absorbing $\Sigma_a=10=\Sigma_t$ (black).}
\label{fig:checkerboard_geometry}
\end{figure}

\begin{figure}
\begin{minipage}[b]{.49\textwidth}
\includegraphics[width=\textwidth]{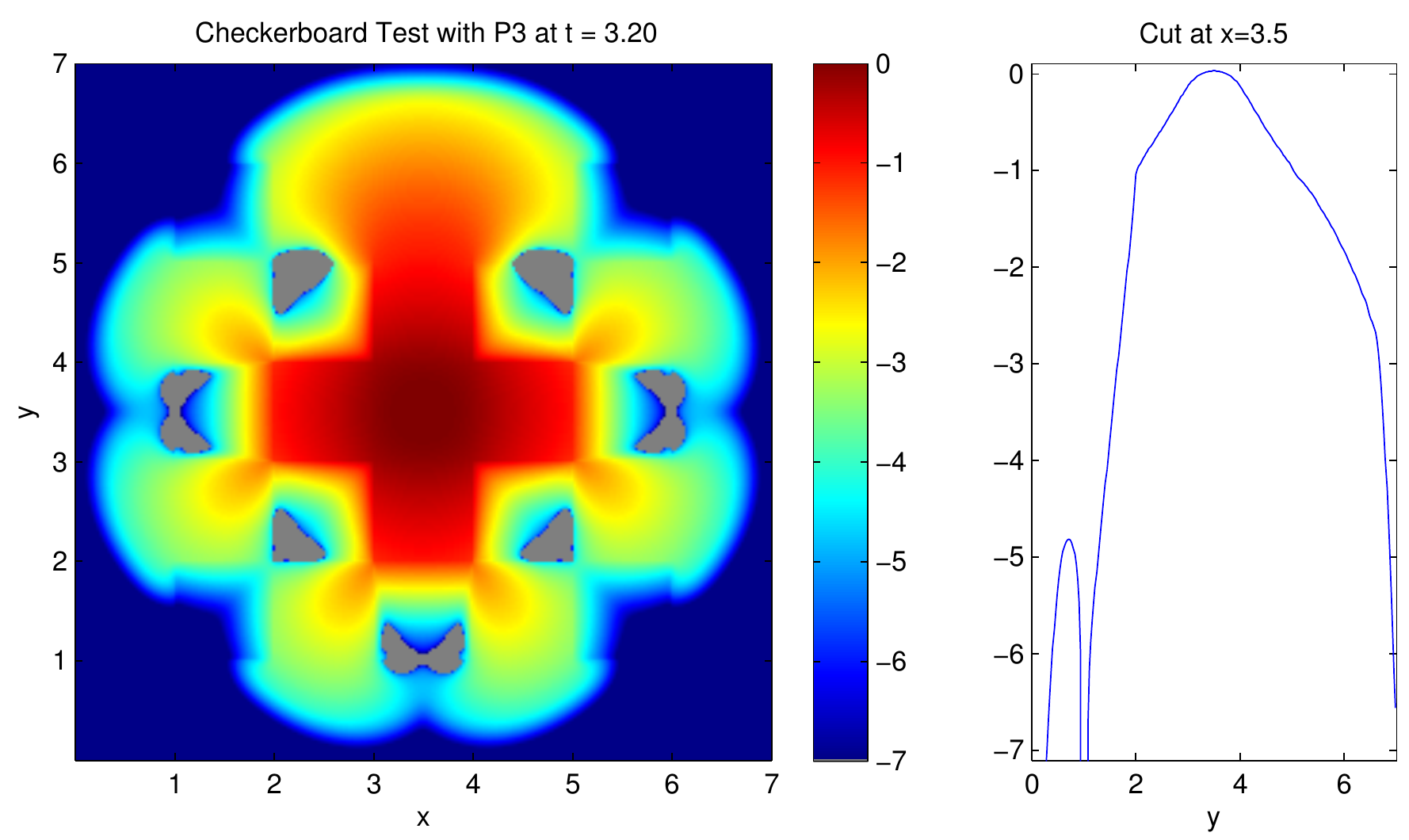}
\end{minipage}
\hfill
\begin{minipage}[b]{.49\textwidth}
\includegraphics[width=\textwidth]{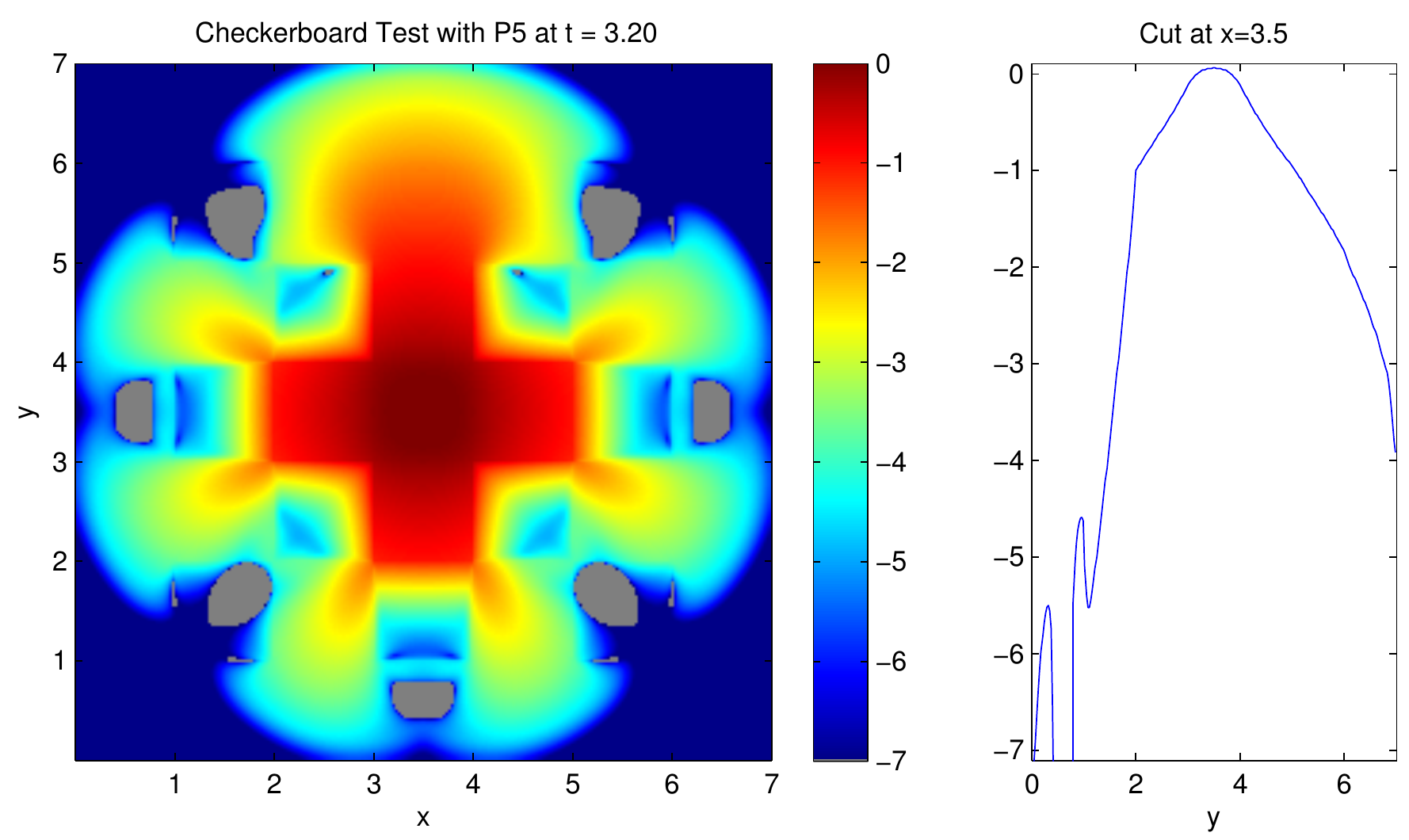}
\end{minipage}

\vspace{.5em}
\begin{minipage}[b]{.49\textwidth}
\includegraphics[width=\textwidth]{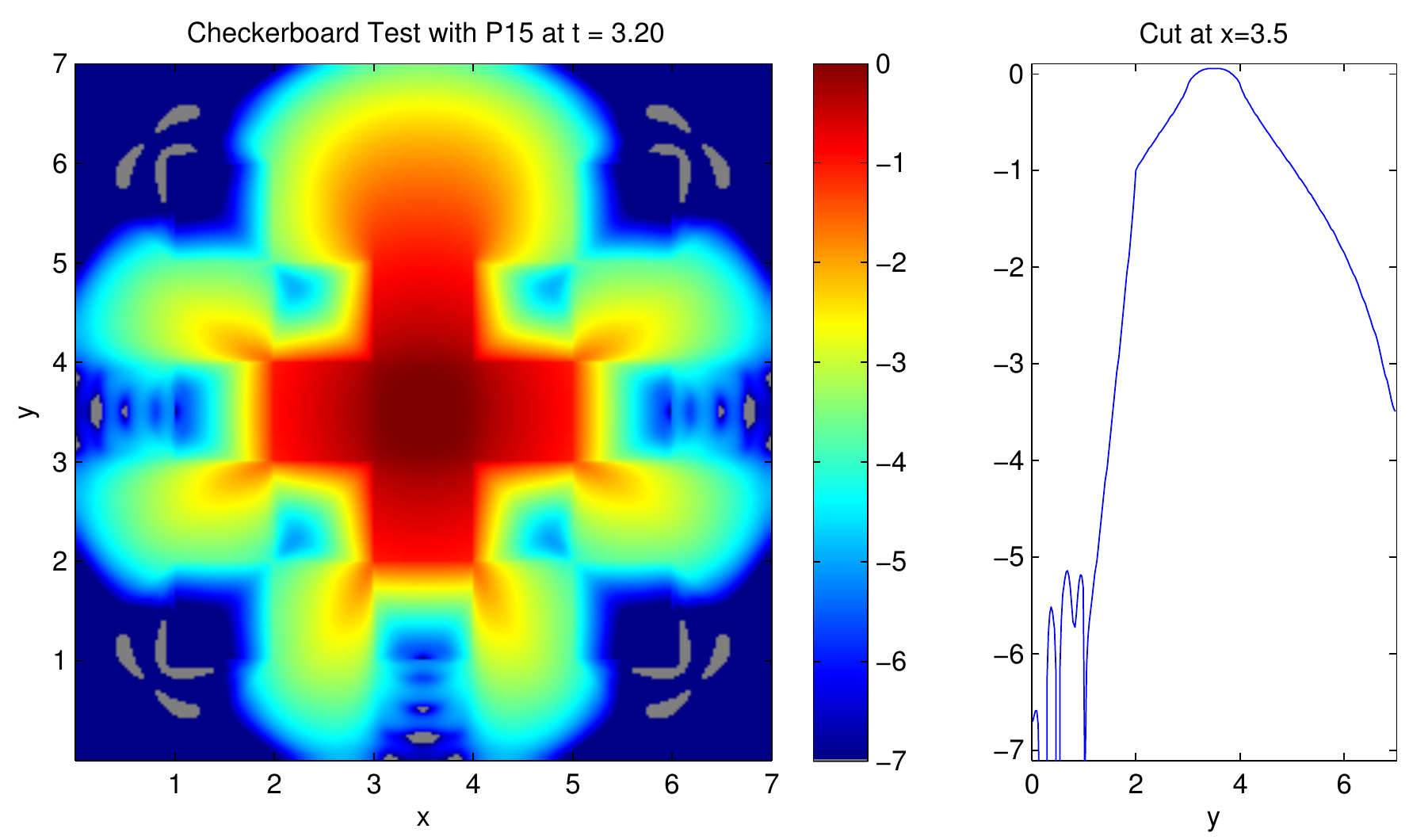}
\end{minipage}
\hfill
\begin{minipage}[b]{.49\textwidth}
\includegraphics[width=\textwidth]{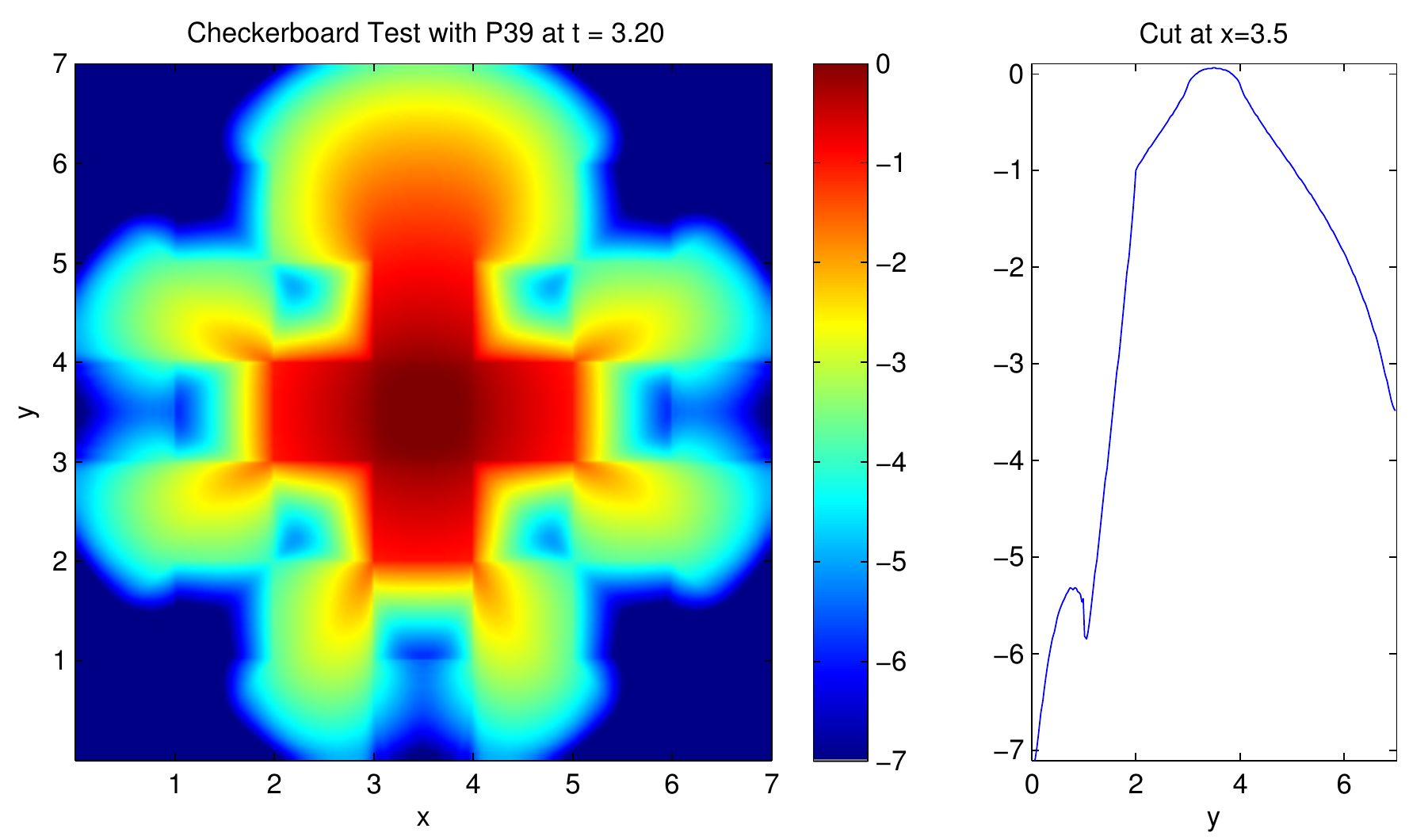}
\end{minipage}
\vspace{-.6em}
\caption{Checkerboard problem. Scalar flux $R_0^0$ at $t=3.2$ for several moment approximations ($P_3$, $P_5$, $P_{15}$, and $P_{39}$), computed with $250\times 250$ grid points. The values are plotted in a logarithmic scale and limited to seven orders of magnitude. The left panel shows the solution in a color plot, where negative values of $R_0^0$ are depicted in gray. The right panel shows the solution evaluated along the vertical line $x=3.5$.}
\label{fig:checkerboard}
\end{figure}

In the \textsf{StaRMAP} project, the checkerboard test case is implemented in the example file \texttt{starmap\_ex\_checkerboard.m}. Extrapolation boundary conditions\footnote{Most frequently, this test is equipped with vacuum boundary conditions, which are not implemented in the \textsf{StaRMAP} routines. The easiest way to ``emulate'' vacuum in \textsf{StaRMAP} for this test would be to extend the computational domain.} are enforced. At the initial time $t=0$, all quantities are zero. The spatial grid is $250\times 250$ for all cases.

We compare the scalar flux using different orders of $P_N$ approximations, with the scalar flux $R_0^0$ at $t=3.2$ shown in Fig.~\ref{fig:checkerboard}. The case $N=3$ corresponds to 10 angular degrees of freedom, while the case $N=39$ possesses 820 angular degrees of freedom. For increasing $N$, several well-known properties of the $P_N$ approximation can be observed. First, the maximum propagation speed of information approaches the correct limit of 1 which can be seen at the upper front. Second, the Gibbs phenomena in the solution (left, right, bottom) are diminished. The $P_{39}$ solution can be seen as a fully converged transport solution that in particular does not possess any negative scalar flux values.

To get an impression about the computational cost and efficiency of the \textsf{StaRMAP} solver: the computation of a $P_5$ solution on a $100\times 100$ spatial grid corresponds to 18.3 million degrees of freedom in space, angle, and time, and it takes about 1.5 seconds to compute on a 2008 Lenovo W500 laptop.

\subsection{Line Source}
\label{subsec:line_source}
The line source problem has been investigated for the $P_N$ equations in \cite{BrunnerHolloway2005}. Essentially one is trying to compute the Green's function for an initial isotropic Dirac mass at the origin, i.e.
\begin{equation*}
\psi(t=0,x,\Omega) = \frac{1}{4\pi}\delta(x)\;.
\end{equation*}
One can also imagine an infinite line emitting particles isotropically (hence the name). There exists a semi-analytical solution to the full transport equation \eqref{eq:RTE} for this problem due to Ganapol et al.~\cite{GanapolBakerDahlAlcouffe2001}. The exact solution consists of a circular front moving away from the origin as well as a tail of particles which have been scattered or not emitted perpendicularly from the line.

\begin{figure}
\begin{center}
\begin{minipage}[b]{.75\textwidth}
\includegraphics[width=\textwidth]{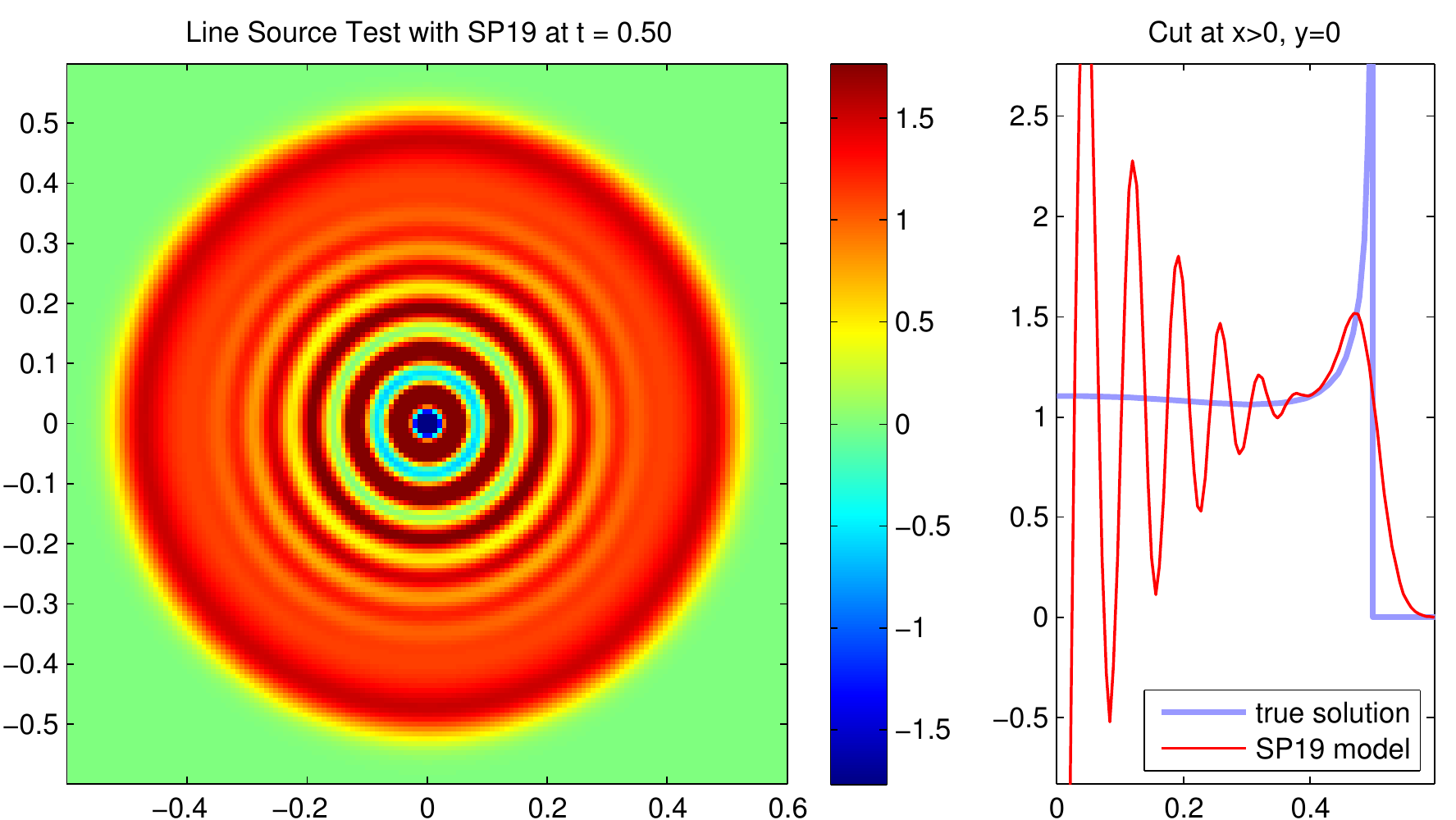}
\end{minipage}

\vspace{.5em}
\begin{minipage}[b]{.75\textwidth}
\includegraphics[width=\textwidth]{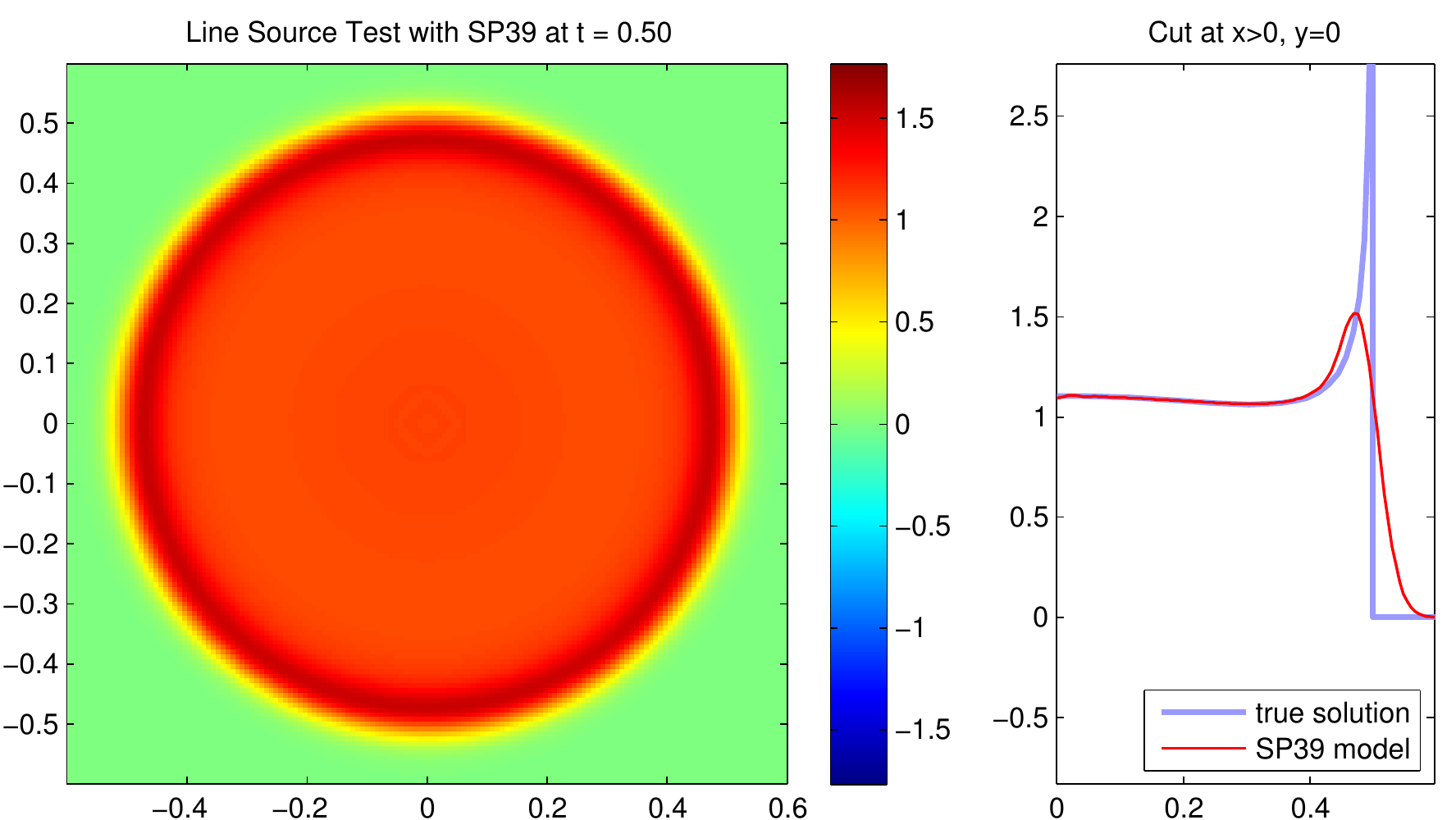}
\end{minipage}
\end{center}
\caption{Results of the line source problem at $t=0.5$, computed with $SP_{19}$ (top) and $SP_{39}$ (bottom). The left sub-figures show the 2d solution profile. The right sub-figures show the solution on a cut along the positive radial axis (red), together with Ganapol's semi-analytic solution (blue). Note that $SP_N$ and $P_N$ are equivalent in this case.}
\label{fig:linesource}
\end{figure}

When setting up this test case, some care is in order. Although Ganapol's semi-analytical benchmark solution does not consist of Diracs, it has a singularity at the edge. More precisely, the solution $R_0^0$ approaches infinity as $\frac{1}{r_0-r}$ (where $r$ is the distance from origin and $r_0$ is the front). On the other hand, Brunner \& Holloway \cite{BrunnerHolloway2005} have shown that the analytical solution of the $P_1$ equations for an initial Dirac consists of traveling Diracs at the front. This is also true for higher-order $P_N$ solutions. With a grid-based method as simple as ours, one cannot expect to capture either the Diracs or the singularities.

Instead, we realize the initial condition as a narrow Gaussian in space
\begin{equation}
\label{eq:Gaussian}
R_0^0 = \frac{1}{4\pi\sigma}\exp\prn{-\frac{x^2+y^2}{4\sigma}}
\end{equation}
with $\sigma = 3.2\times 10^{-4}$. This way, the integral of $R_0^0$ over the whole space is one, so it can be compared to the scalar flux of the benchmark solution.

We choose $\Sigma_a = 0$ and $\Sigma_{s} = \frac{1}{4\pi}$, so that $\Sigma_{s0} = 1$. The computational domain is $[-0.6,0.6]\times[-0.6,0.6]$.  In the \textsf{StaRMAP} project, the line source test case is implemented in the example file \texttt{starmap\_ex\_linesource.m}. This m-file also approximates Ganapol's semi-analytic solution with an accuracy that makes the resulting function indistinguishable from the exact solution when plotted.

At this point, two aspects should be stressed. A piece of good news is that the line source test case satisfies the conditions under which the $SP_N$ and the $P_N$ equations are equivalent (see \cite{McClarren2011}). We therefore can compute the numerical solution using the $SP_N$ equations, and thus obtain the same result as with the $P_N$ equations, but much faster. A note of caution must be given regarding the width of the initial Gaussian. The pseudo-time $\sigma$ must be chosen large enough so that the Gaussian is well resolved by the grid. If one fails to do so, spurious oscillations will arise (due to the absence of limiters in the numerical scheme) that can be quite detrimental to the quality of the numerical approximation.

The computational results for $SP_{19}$ and $SP_{39}$, both computed on a $150\times 150$ spatial grid, are shown in Figure \ref{fig:linesource}. These results are in line with the observations from \cite{BrunnerHolloway2005} and the previous test case. While in theory the $P_N$ equations have a rotationally symmetric solution, it has been observed in \cite{BrunnerHolloway2005} that a higher-order scheme is necessary to observe the rotational invariance numerically. In our case, both the $SP_{19}$ and the $SP_{39}$ solutions exhibit this behavior. For the lower-order $SP_{19}$ model, the Gibbs phenomenon is very clearly visible. The $SP_{39}$ solution, on the other hand, shows only very tiny oscillations. Of course, it cannot capture the singular behavior of the solution, but it does a good job at capturing the radiation front. These two examples are shown because the \textsf{StaRMAP} code solves them within a few minutes. However, if one is willing to accept longer computation times, then suitable grid refinement and a more narrowly focused initial condition would yield an even better approximation.

\subsection{Beam in Void and Medium}
In this problem we study the advection of a packet of particles that are emanating from a source (essentially compactly supported in space) and all move in the same direction in empty space, before they hit a material. This test investigates how well the method performs in a void, at material interfaces, and with anisotropic scattering.

The source can be written as
\begin{equation*}
q(t,x,\Omega) = \bar q(x) \delta(\Omega-\Omega_0)\;,
\end{equation*}
where $\bar q$ is a spatial distribution and $\Omega_0$ is a direction in the plane. Here we choose $\Omega_0$ to have an angle of $\pi/6$ with respect to the $x$-axis and $\bar q$ is the same spatial Gaussian as for the line source problem. The proper solution of this beam problem involves the challenge of calculating the correct moments of the source for all components of the $P_N$ solution vector. We do so by following the definitions and transformations given in Sect.~\ref{sec:PN_2d}.

The domain for this test is $[-0.6,0.6]\times [-0.6,0.6]$ with extrapolation boundary conditions on all sides. The region with $x<0.3$ contains no material, i.e.\ $\Sigma_t = 0$, whereas the region $x\ge 0.3$ is a material with no absorption ($\Sigma_a = 0$) but anisotropic Henyey-Greenstein \cite{HenyeyGreenstein1941} scattering $\Sigma_{s\ell} = 100 g^\ell$ with $g = 0.85$.

In the \textsf{StaRMAP} project, this test case is realized similarly to the manufactured solution example, namely the file \texttt{starmap\_create\_beam.m} performs the computation of the beam initial conditions, and then generates an example file \texttt{starmap\_ex\_beam\_auto.m} that can then be executed to produce the plots shown in Fig.~\ref{fig:beam}.

\begin{figure}
\begin{minipage}[b]{.485\textwidth}
\includegraphics[width=\textwidth]{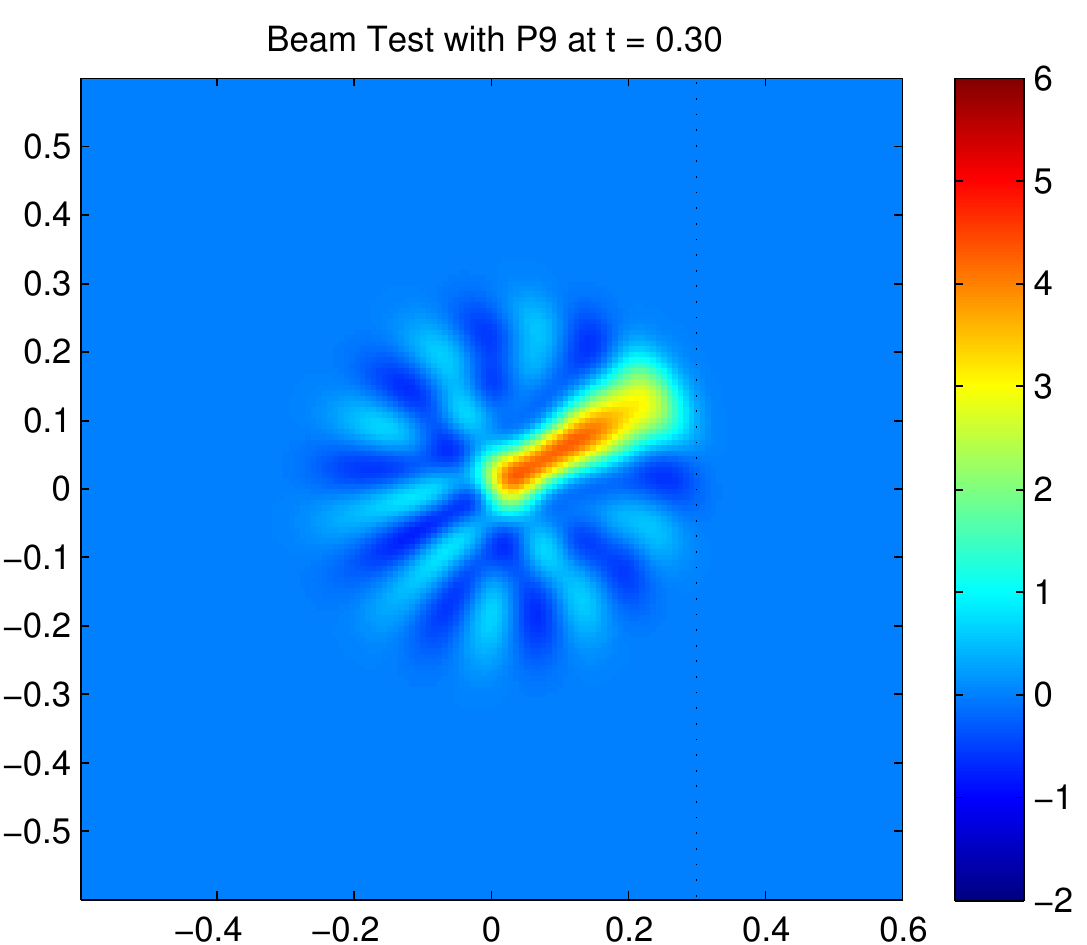}
\end{minipage}
\hfill
\begin{minipage}[b]{.485\textwidth}
\includegraphics[width=\textwidth]{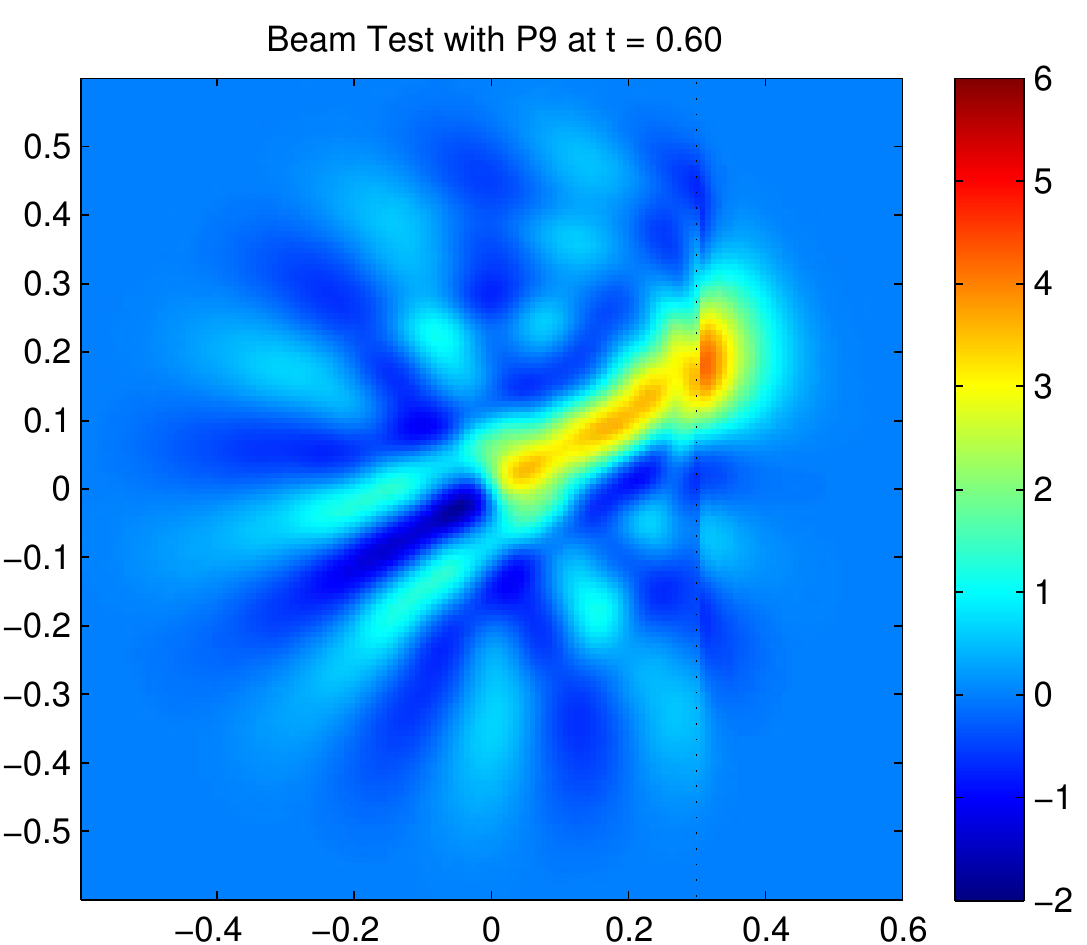}
\end{minipage}

\vspace{.5em}
\begin{minipage}[b]{.485\textwidth}
\includegraphics[width=\textwidth]{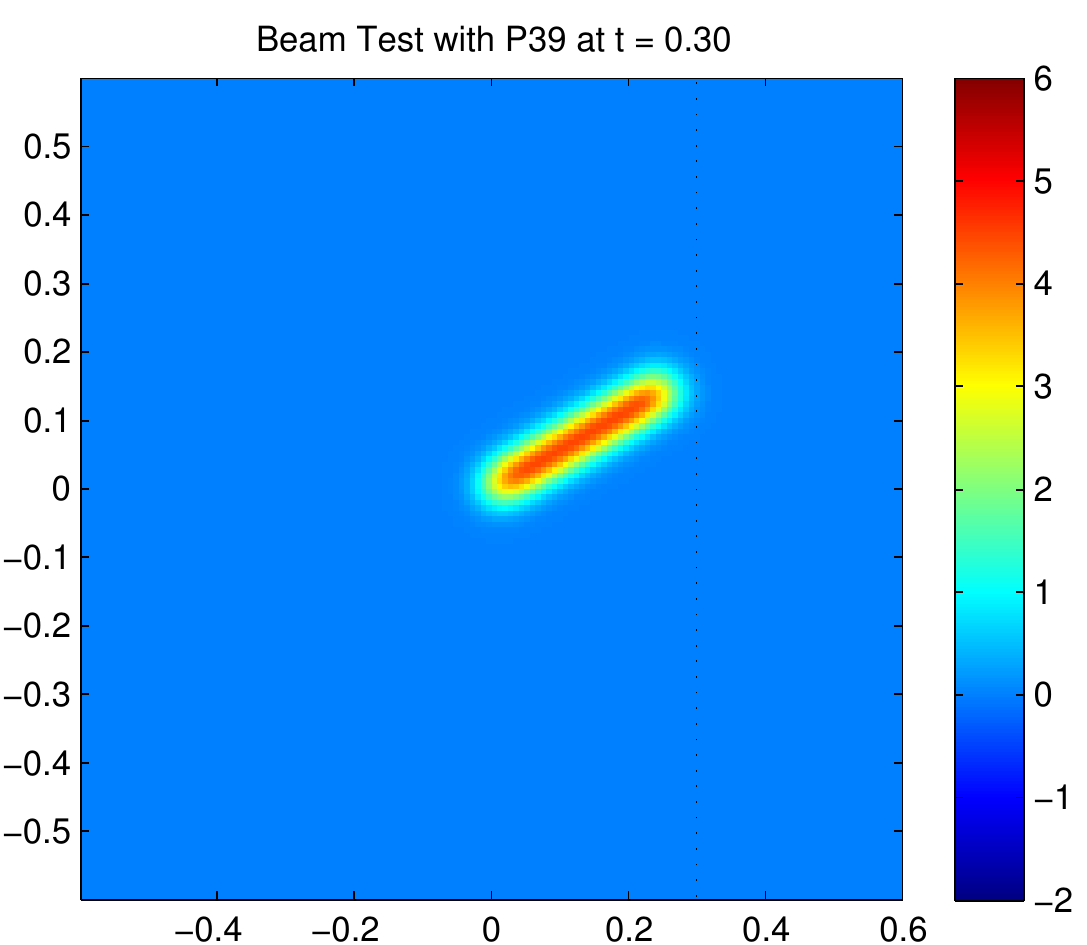}
\end{minipage}
\hfill
\begin{minipage}[b]{.485\textwidth}
\includegraphics[width=\textwidth]{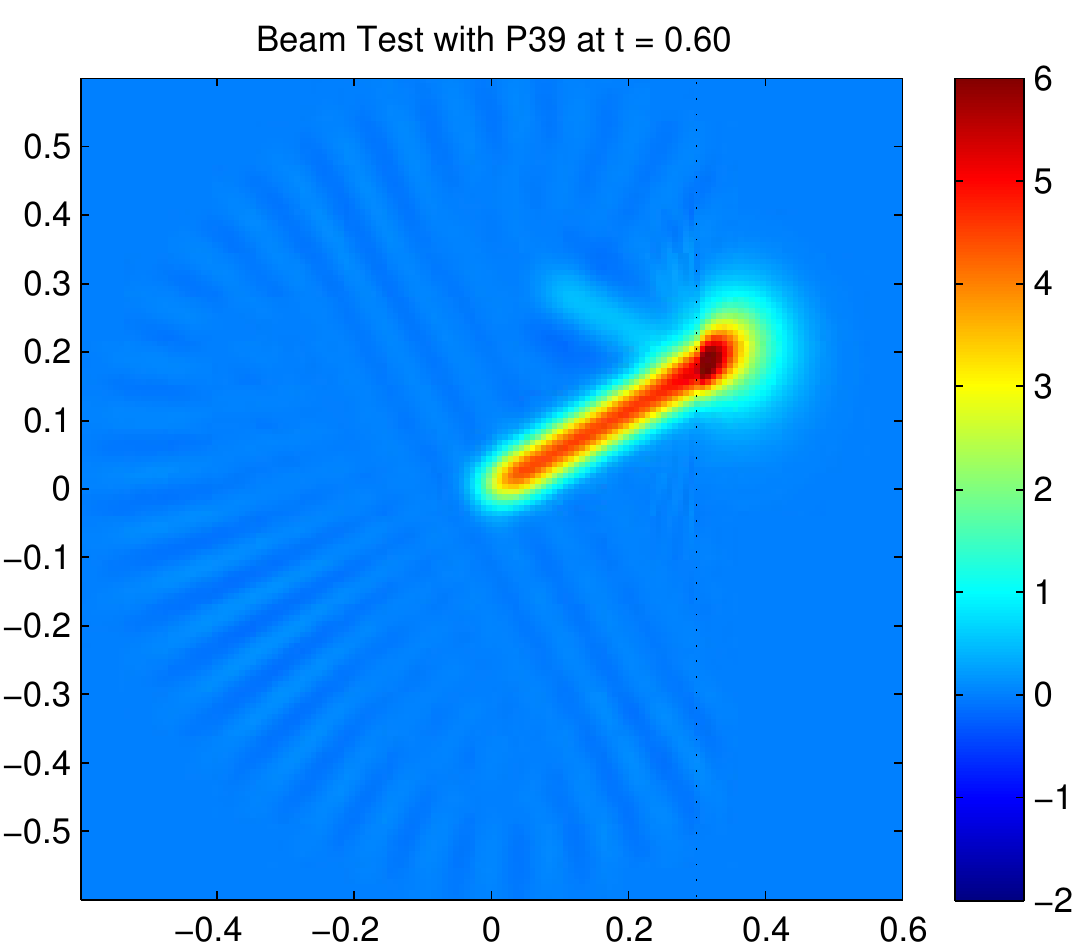}
\end{minipage}
\caption{Beam problem, computed with $P_9$ (top row) and $P_{39}$ (bottom row). Shown is the scalar flux $R_0^0$ at times $t=0.3$ (left column) and $t=0.6$ (right column). A spatially Gaussian source propagates into the direction with angle $\pi/6$ with respect to $x$-axis, and hits a medium at $x=0.3$.}
\label{fig:beam}
\end{figure}

Figure~\ref{fig:beam} shows the scalar flux $R_0^0$ at two times ($t\in\{0.3,0.6\}$) for a beam computation on a spatial grid of $150\times 150$ cells, using two moment models: $P_9$ (top two figures) and $P_{39}$ (bottom two figures). The full transport equation \eqref{eq:RTE} would initially (i.e., before the beam hits the medium) just advect the source with unit speed into direction $\Omega_0$. Due to the truncation in the $P_N$ approximation, the beam smears out, and Gibbs artifacts appear during the evolution. These are clearly visible in the $P_9$ solution (top left figure), in which the beam spreads out and oscillations occur. In contrast, the $P_{39}$ solution at $t=0.3$ (bottom left figure) is very close to a fully converged transport solution; no truncation artifacts are visible.

At the material interface, particles start to scatter strongly, albeit in a forward direction. As a consequence, the propagation of the beam effectively slows down and it is smeared out. The $P_{39}$ solution at $t=0.6$ (bottom right figure) shows the correct beam profile (including a reflected beam at the interface between medium and void), albeit a small amount of Gibbs artifacts that are visible. The corresponding $P_9$ solution (top right figure) is of significantly lower accuracy: while the general direction of the radiative flux is represented correctly, the precise shape of the radiation profile (e.g., beam thickness, reflections, maxima) is not well-captured.

When considering this test, it must be remarked that beams are not the type of phenomena that one would usually consider using moment methods for. In addition, moment approximations to problems with voids in general do not converge to the correct steady state solution as $t\to\infty$ (the Gibbs oscillations would amplify in time). Nevertheless, as the results in Fig.~\ref{fig:beam} show, and as one can verify with the \textsf{StaRMAP} project files, beams can be reasonably well approximated. Thus, with the moment order $N$ chosen suitably large and for sufficiently short times, challenging problems, such as beams moving from a void into a highly scattering medium, can be computed accurately with \textsf{StaRMAP}.

\subsection{Further Test Cases}
The \textsf{StaRMAP} code has also been applied to several other examples. Among them are:
\begin{itemize}
\item The ``boxes'' test, implemented in the example file \texttt{starmap\_ex\_boxes.m}, is a test case that was proposed by McClarren \cite{McClarren2011} to demonstrate the equivalence of the $SP_N$ and the $P_N$ equations under certain assumptions.
\item The ``control rod'' test was devised by Olbrant \cite{OlbrantLarsenFrankSeibold2013} as an example for time-dependent material coefficients. The setup can be interpreted as a simple model for the evolution of the neutron density in a nuclear reactor, when an absorbing rod is moved into and out of the domain. In the \textsf{StaRMAP} project, it is implemented in the example file \texttt{starmap\_ex\_controlrod.m}.
\end{itemize}
Further details and numerical results for both tests are given in \cite{OlbrantLarsenFrankSeibold2013}.

\vspace{1.5em}
\section{Conclusions and Outlook}
\label{sec:conclusions_and_outlook}
The presented numerical approach to solve the spherical harmonics $P_N$ equations (and variants thereof) of radiative transfer has been demonstrated to be highly flexible, and to be applicable to a wide variety of radiative transfer problems in two space dimensions. The method is implemented in the \textsc{Matlab} project \textsf{StaRMAP}, that can be downloaded \cite{StaRMAP}, and all examples presented in this paper can be reproduced by the user.

As shown in this paper, the staggered grid finite difference approach employed in \textsf{StaRMAP} is second order accurate, and stable independent of the magnitude of the material scattering and absorption coefficients (with the CFL condition that is commonly incurred for advection problems). Further advantages and drawbacks of the method have been discussed in detail.

The availability of the complete \textsc{Matlab} code serves another purpose besides the reproducibility of the results presented in this paper: we encourage users to create their own \textsf{StaRMAP} example files for test cases that arise in their own work. The modular structure of the code facilitates this goal: in general, users should not be required to modify the solver file \texttt{starmap\_solver.m}, but instead they can simply build on the existing example files.

While the current \textsf{StaRMAP} code is quite general in that it allows for arbitrary moment order and for various types of moment systems, it has a number of crucial limitations. Some cannot be overcome easily, such as the occurrence of Gibbs phenomena or the regularity of the grid structures (see Sect.~\ref{subsec:limitations} for more details). In turn, some limitations can (and shall) be addressed in the future, as follows.

In principle, the presented methodology applies to the full 3D case as well. We have decided against generalizing the current \textsf{StaRMAP} code to 3D, mainly for the simple reason of computational cost and memory requirements. While 2D problems can be solved in \textsc{Matlab} with a good accuracy, in reasonable compute times, and with a reasonable amount of memory consumption, we do not perceive things as quite mature enough yet for a 3D \textsc{Matlab} version of \textsf{StaRMAP}.

The \textsf{StaRMAP} code can be extended to other moment models that exhibit a similar coupling structure between the moments as the $P_N$ and the $SP_N$ equations. Specifically, the filtered $P_N$ ($FP_N$) method \cite{McClarrenHauck2010} falls into this category, as do the diffusion-corrected $P_N$ ($D_N$) equations, which contain an additional diffusion term in the highest order moments. These last types of models can be rationalized via asymptotic analysis \cite{SchaeferFrankLevermore2011} or by the Mori-Zwanzig formalism of statistical mechanics \cite{FrankSeibold2011}. The splitting procedure employed in the \textsf{StaRMAP} code allows for the addition of a diffusion step applied to the moments of the highest order.
Finally, it must be verified that the method presented here satisfies an important property for numerical approaches for radiative transfer, namely whether it is asymptotic-preserving (AP) \cite{Jin1999}. It can be quite easily shown that the scheme proposed in this paper is at least formally AP. Moreover, as has been suggested in \cite{Jin2012}, because of the staggered grid we obtain a compact stencil in the diffusive limit. In addition, due to the explicit integration in time, the time step does not depend on the possibly stiff right hand side. The AP property will be investigated in detail in a companion paper \cite{FrankSeibold2014}.

\vspace{1.5em}
\section*{Acknowledgments}
The authors would like to thank E. Olbrant for helpful discussions. B. Seibold would like to acknowledge the support by the National Science Foundation. B. Seibold was supported through grant DMS--1115269. In addition, B. Seibold wishes to acknowledge partial support by the NSF through grants DMS--1007967 and DMS--1318709. M. Frank has been supported by the German Research Foundation DFG under contract number FR 2841/1-1.

\bibliographystyle{plain}
\bibliography{references_complete}

\end{document}